\documentclass[times]{elsarticle}

\usepackage{jcomp}
\usepackage{framed,multirow}

\usepackage{amssymb}
\usepackage{latexsym}

\usepackage{url}
\usepackage{xcolor}
\definecolor{newcolor}{rgb}{.8,.349,.1}
\usepackage{booktabs}
\usepackage{bm}
\usepackage{amsmath}
\usepackage{amsthm}
\usepackage{algorithm}
\usepackage{algpseudocode} 
\usepackage{float}
\usepackage{lineno}

 \usepackage{url}
\usepackage{hyperref}
\hypersetup{
     unicode=false,          
    pdftoolbar=true,        
    pdfmenubar=true,        
    pdffitwindow=false,     
    pdfstartview={FitH},    
    pdftitle={My title},    
    pdfauthor={Author},     
    pdfsubject={Subject},   
    pdfcreator={Creator},   
    pdfproducer={Producer}, 
    pdfkeywords={keyword1} {key2} {key3}, 
    pdfnewwindow=true,      
    colorlinks=true,       
    linkcolor=blue,          
    citecolor=blue,        
    filecolor=magenta,      
    urlcolor=cyan           
}

\usepackage{todonotes}

\newcommand{\kbt}{k_B T}

\newcommand{\LLL}{\mathcal{L}}
\newcommand{\AAA}{\mathcal{A}}
\newcommand{\BBB}{\mathcal{B}}
\newcommand{\SSS}{\mathcal{S}}
\newcommand{\KKK}{\mathcal{K}}
\newcommand{\III}{\mathcal{I}}
\newcommand{\pp}{\partial}
\newcommand{\nnorm}[1]{\Vert #1 \Vert }
\newcommand{\dd}{\mathop{}\!\mathrm{d}}
\newcommand{\dt}{\mathop{}\!\mathrm{dt}}

\newcommand{\real}{\mathbb{R}}

\graphicspath{{figures/}{images/}}
\newcommand{\xx}{{\bm{x}}}
\newcommand{\XX}{{\bm{X}}}
\newcommand{\vv}{{\bm{v}}}
\newcommand{\VV}{{\bm{V}}}

\newtheorem{theorem}{Theorem}

\journal{Journal of Computational Physics}

\begin{document}

\verso{T. Li \textit{etal}}

\begin{frontmatter}

\title{Iterative Neural Variational Solver of Committor Function in Under-damped Langevin Dynamics}%

\author[1,2,3]{Tiejun Li} 
\ead{tieli@pku.edu.cn}
\author[1]{Zhaoqi Liu}
\ead{lzq8219@stu.pku.edu.cn}
\author[4]{Xiang Zhou}
\ead{xizhou@cityu.edu.hk}

\address[1]{LMAM and School of Mathematical Sciences, Peking University, Beijing 100871, China}

\address[2]{Center for Data Science, Peking University, Beijing 100871, China}

\address[3]{Center for Machine Learning Research, Peking University, Beijing 100871, China}
\address[4]{Department of Mathematics,
City University of Hong Kong,
Hong Kong SAR, China}

\begin{abstract}
The committor function describes the probability that a stochastic dynamical system, while transitioning between two metastable states, will first reach one state rather than the other. It is a central object in transition path theory, providing the accurate reaction rate and the characterization of the transition mechanism through reactive probability flux and the transition state ensemble. While the numerical methods for this function are well studied for over-damped Langevin dynamics where the friction damping coefficient goes to infinity, its computation in the general under-damped case remains an open challenge due to the lack of a variational structure in the underlying backward Kolmogorov equation. This restricts the quantitative investigation of the impact of the finite   damping  coefficient on rare transitions in real systems.
In this work, by decomposing the infinitesimal generator of the underdamped Langevin dynamics into a self-adjoint and a skew-adjoint part, we develop an iterative scheme in which each subproblem has a variational form and can be solved efficiently using neural networks. Fundamentally, this approach can be viewed as a generalized Concus-Golub-Widlund (CGW) splitting method and each subproblem has a variational form, which can be solved efficiently. This iterative scheme also has the flexibility of introducing a preconditioning operator associated with a new overdamped Langevin dynamics, which naturally leads to the design of enhanced adaptive sampling for the subproblems.
We validate the efficiency and accuracy of our method on benchmark systems including a one-dimensional double-well potential, the Müller–Brown potential, and alanine dipeptide. The proposed iterative variational framework provides a practical and extensible numerical tool for probing rare transition events in Langevin dynamics across varying damping coefficients.
\end{abstract}

\begin{keyword}
\KWD Committor function\sep Under-damped Langevin dynamics\sep Rare events\sep Transition Path Theory
\end{keyword}

\end{frontmatter}


 
\section{\label{introduction}Introduction}

One of the most fundamental problems in stochastic systems is understanding the transition between two long-lived metastable states \cite{okuyama1998transition,vanden2006towards,berteotti2009protein,vanden2010transition}. However, many important transitions in chemical reactions, protein folding, and materials design occur on time scales that are currently infeasible for direct microscopic simulation. Transition path theory \cite{vanden2006towards,vanden2010transition} (TPT) provides a framework for analyzing the transition-path ensemble and transition rates. A central object in TPT is the committor function $q(\xx)$ defined as follows. Consider the following stochastic differential equation (SDE):
\begin{equation}\label{general_SDE}
    \dd \XX_t = \bm{b}(\XX_t)\dt + \sigma(\XX_t)\dd \bm{W}_t
\end{equation}
where \(\XX_t\in\Omega\subset\real^{n}\), \(\bm{b}:\real^{n}\to\real^{n}\) and \(\sigma:\real^{n}\to\real^{n\times n}\) are the drift and diffusion functions, respectively, and \(\bm{W}_t\) is an \(n\)-dimensional standard Wiener process. 
 Given two metastable sets $A,B \subset \Omega$, the committor function describes the probability that a trajectory starting at $\xx$ reaches state $B$ before state $A$:
\begin{equation}\label{definition_committor}
    q(\xx) = \mathbb{P}(\tau_B < \tau_A \mid \xx_0 = \xx)
\end{equation}
where $\tau_S = \inf\{t>0:\XX_t \in S\}$ is the hitting time of the set $S\subset\Omega$. The committor determines reaction rates, identifies optimal reaction coordinates, locates the dividing surface (often the $q\approx0.5$ isosurface), and characterises the structure of the transition-state ensemble \cite{vanden2006towards,Kang2024,he2022,Chen2023}. Besides, because the committor encodes both kinetic and thermodynamic information, its accurate estimation enables quantitative prediction of kinetics \cite{vanden2006towards}, precise interpretation of maximum-flux paths and transition tubes \cite{vanden2006towards,Berkowitz1983}, and principled coarse-graining of high-dimensional molecular and stochastic systems \cite{Kang2024}.

There are two common ways to estimate the committor function in the general case. The first approach uses Monte Carlo sampling based on the definition \eqref{definition_committor}, which is often called committor analysis in the transition-path sampling literature~\cite{Dellago2006,jung2023,Bolhuis2002}. It requires many independent simulations from each configuration of interest until hitting one of the two metastable sets, so it becomes prohibitively expensive in practice. 
The second approach considers the backward Kolmogorov equation for committor function~\cite{vanden2010transition}:
\begin{equation}\label{general_committor}
\left\{
\begin{aligned}
    &\LLL q = 0\\
    &q(\xx)\big|_{\pp A} = 0\\
    &q(\xx)\big|_{\pp B} = 1
\end{aligned}
\right.
\end{equation}
where \(\LLL\) is the infinitesimal generator of \eqref{general_SDE}. However, this PDE lives on a high-dimensional domain and therefore suffers from the curse of dimensionality if using traditional methods.

The most interesting and well-studied model for molecular systems is the over-damped Langevin dynamics 
\begin{equation}
    \label{overdamped_SDE}
    \dd \XX_t = -\nabla_{\xx} U(\XX_t) \dt + \sqrt{2\beta^{-1}}\dd \bm{W}_t
\end{equation}
where $U: \Omega \to \real$ is the potential function in configuration space, and $\beta = (k_B T)^{-1}$ is the inverse temperature. In this setting, the infinitesimal generator is self-adjoint on the weighted Sobolev space $H^1_{\mu}(\Omega)$, and equation \eqref{general_committor} admits the variational form \cite{vanden2006towards,khoo2019solving}
\begin{equation}\label{variational_form_overdamped_langevin}
    \left\{
    \begin{aligned}
        &q_0 = \mathop{\arg\min}\limits_{f\in H^{1}_{\mu}(\real^{n})} \big\{ \ I_0[f]:= \beta^{-1} \int \mu \Vert \nabla_{\xx} f \Vert^2\dd {\xx} \big\} \\
        &\text{s.t.}\quad f|_{\pp A} = 0,\ f|_{\pp B} = 1
    \end{aligned}\right.
\end{equation}
where $\mu \propto \exp(-\beta U)$ is the invariant measure of the system. The subscript $0$ denotes the over-damped Langevin dynamics.
Many machine learning studies solve this variational problem using the Deep Ritz method~\cite{khoo2019solving,lilinren2019,Kranthi2022,lin2024deep,Kang2024,Trizio2025}, and the optimal value $I_0[q_0]$ gives the transition rate between two metastable sets~\cite{vanden2006towards,Kang2024}.

In this work, we aim at computing the  committor function for the following more fundamental system, the under-damped Langevin dynamics: 
\begin{equation}\label{underdampedlangevindynamics1}
    \left\{
    \begin{aligned}
        & \dd \XX_t = \VV_t\dt \\
        & \dd \VV_t = -\nabla_{\xx} U(\XX_t) \dt -\gamma \VV_t \dt + \sqrt{2\gamma \beta^{-1}}\dd \bm{W}_t
    \end{aligned}\right.
\end{equation}
where $\gamma$ is the friction coefficient. As $\gamma \to \infty$, the 
over-damped Langevin dynamics \eqref{overdamped_SDE} arises as the high-friction limit of the under-damped Langevin dynamics (see \cite{Vanden-Eijnden2006,Huisinga2001,Schütte2000}). The committor function \(q=q(\xx,\vv)\) of \eqref{underdampedlangevindynamics1} admits an asymptotic expansion as \(\gamma\to\infty\) \cite{Vanden-Eijnden2006}:
\begin{equation}\label{asymptotic}
    q(\xx,\vv) = q_0(\xx) + \gamma^{-1}\,\vv\cdot\nabla_{\xx} q_0(\xx) + O(\gamma^{-2})
\end{equation}
where \(q_0(\xx)\) is the committor for the over-damped Langevin dynamics with the same potential and temperature. This expansion makes explicit the connection between the two regimes: for large friction \(\gamma\), the over-damped committor \(q_0\) is the leading-order approximation to the full under-damped committor; for moderate or small \(\gamma\), inertial (velocity-dependent) corrections become significant.

Beyond the committor, other important differences exist between the two systems. For example, choosing an appropriate friction matrix can optimize convergence rates according to the central-limit theorems for under-damped dynamics\cite{Lelièvre2023,Kauzlarić2011}. Dynamical and thermodynamic quantities—including entropy production \cite{Bo2012,Bo2014,MATSUO2000188}, transition rates between metastable states \cite{vanden2006towards,vanden2010transition,Vanden-Eijnden2006}, and relations given by thermodynamic uncertainty principles \cite{Lee2019}—also differ substantially from their over-damped counterparts. Consequently, although over-damped Langevin dynamics is a useful simplification, it omits features relevant to many applications. To our knowledge, work on committor functions for under-damped Langevin dynamics remains limited. This raises two natural questions: how can one compute the committor in under-damped systems, and how does the friction coefficient influence transitions between metastable states?

 As shown in Sec.~\ref{Main_Method}, the infinitesimal generator \(\LLL\) of eq.\eqref{underdampedlangevindynamics1} is not self-adjoint with respect to the invariant measure; specifically, \(\LLL\) is a summation of a self-adjoint operator and a skew-adjoint operator. Consequently, the backward Kolmogorov equation cannot be cast as a variational problem with an energy functional depending only on first-order derivatives and zeroth-order terms and the Deep Ritz method is not directly applicable. While PINNs can be used to solve \eqref{general_committor}, the infinitesimal operator \(\LLL\) typically has several eigenvalues near zero \cite{Schütte2000}; consequently, a low PINN loss does not guarantee an accurate committor, since any eigenfunction with an eigenvalue close to zero can yield a small loss. As we demonstrate numerically, PINN solutions may fail to capture the dominant saddle-point structure of the transition. 

In this work, we design an iterative scheme to solve the backward Kolmogorov equation \eqref{general_committor}. Building upon the classical Concus-Golub-Widlund (CGW) splitting method~\cite{concus1976,Widlund1978}, we propose a generalized CGW formulation. By explicitly incorporating a spatial operator derived from the infinitesimal generator of over-damped Langevin dynamics, our generalized scheme overcomes the limitations of standard scalar shifts, achieving a substantially accelerated convergence rate and enhanced algorithmic robustness. Furthermore, drawing on recent variational approaches \cite{khoo2019solving,Kang2024}, we cast the resulting subproblems as optimization tasks that can be efficiently solved by the Deep Ritz method. To ensure high fidelity in our neural network approximations, we introduce an auxiliary potential $\widetilde{U}$ to drive enhanced sampling within the transition region—the area where the committor function changes rapidly. Ultimately, we demonstrate that the proposed method yields highly accurate committor estimates and successfully recovers the sharp isosurface near $q=0.5$, thereby precisely identifying the system’s transition states.

In summary, the main contributions of this work are:

\begin{itemize}
    \item We develop an iterative neural solver of Committor Function in under-damped Langevin dynamics. Formulated as a generalized CGW splitting, our approach generates variational subproblems efficiently solved by the Deep Ritz method. Our method proves highly robust in challenging, high-dimensional regimes, successfully resolving sharp transition interfaces. To our knowledge, this is the first practical numerical solver for under-damped Langevin dynamics.
    \item We introduce a novel auxiliary potential, $\widetilde{U}$, to drive sampling directly toward transition regions. Crucially, this process is distinct from importance sampling, as it does not involve reweighting samples to recover equilibrium distribution. Instead, it acts as a preconditioning operator during training: concentrating training data at the steepest committor gradients substantially accelerates convergence and dramatically improves the recovery of the $q \approx 0.5$ isosurface.
    \item We propose a novel reduction framework for committor function of under-damped Langevin dynamics, establishing a theoretical foundation which proves that the true committor function can be systematically approximated using only collective variables (CVs) and their associated phase-space velocities.
    \item We explore the difference between the over-damped and under-damped Langevin committor functions and transition rates, demonstrating that the friction coefficient, $\gamma$, critically dictates transition kinetics and simulation efficiency.
\end{itemize}

\subsection{Related works}

The first neural-network solver for the committor of over-damped Langevin dynamics was proposed by Khoo \textit{et al.} \cite{khoo2019solving}. They used the variational formulation \eqref{variational_form_overdamped_langevin} and approximated the committor with a neural network trained by minimizing the associated functional. But the integration error of the variational functional is high due to little overlap between the integrand and the invariant measure, resulting in a poor estimation of the committor function.

To overcome the issue of limited data in transition region, the classical enhanced sampling methods are incorperated to generate sufficient samples in transition states. Ren's group proposed the use of classical umbrella sampling and metadynamics to enhance the sampling of the transition region \cite{lilinren2019,lin2024deep,Kranthi2022}.  
Parrinello et~al.\ utilized zero-variance importance sampling to estimate the variational functional \eqref{variational_form_overdamped_langevin} \cite{Kang2024,Trizio2025,kang2025committorsdescriptors}. They presented an iterative scheme and ran biased simulations with the biased potential $-\beta^{-1}\log\Vert \nabla_{\xx} {q_0^k} \Vert^2$, where $q_0^k$ is the learned committor function in the previous step. Recently Wang et al.\ followed the same idea of biased simulation and used a normalizing-flow-based sampler to sample the biased system~\cite{WANG2026114439}.

Other notable approaches for computing the committor in over-damped Langevin systems include a time‑lag variational formulation \cite{he2022,giuseppechen2026,Chen2023,Megías2025}, a semi-group method \cite{Li2022}. Beyond neural networks, alternative machine learning models have been used, for example, a tensor‑network–based method  \cite{chen2023committor}, a Gaussian kernel method\cite{kernelcommittor}, and the Finite Expression Method \cite{Song2025}.

\section{Main Method\label{Main_Method}}
\subsection{Committor function of the under-damped Langevin dynamics}

Consider the under-damped Langevin dynamics
\begin{equation}\label{underdampedlangevindynamics}
    \left\{
    \begin{aligned}
        & \dd \XX_t = \VV_t\dt \\
        & \dd \VV_t = -\nabla_{\xx} U(\XX_t) \dt -\gamma \VV_t \dt + \sqrt{2\gamma \beta^{-1}}\dd \bm{W}_t
    \end{aligned}\right.
\end{equation}
where $U: \Omega \to \real$ is the potential function in configuration space, $\gamma$ is the friction coefficient, $\beta = (\kbt)^{-1}$ is the inverse temperature, and $\bm{W}_t$ is the standard Wiener process in $\mathbb{R}^n$. 
The infinitesimal generator of this diffusion process is
\begin{equation}\label{LLL}
    \LLL u = \vv\cdot \nabla_{\xx} u - (\nabla_{\xx} U+\gamma \vv)\cdot \nabla_{\vv} u + \gamma \beta^{-1} \Delta_{\vv} u
\end{equation}
where $\nabla_{\xx} u = \sum_{i=1}^n\frac{\pp u}{\pp x_i}$, $\nabla_{\vv} u = \sum_{i=1}^n \frac{\pp u}{\pp v_i}$, and $\Delta_{\vv} u = \sum_{i=1}^n\frac{\pp^2 u}{\pp v_i^2}$.
The equilibrium distribution is given by 
\begin{equation}
    \mu(\xx)\otimes \pi(\vv) = \frac{1}{Z_U} e^{-\beta U} \times \frac{1}{Z_{\vv}}e^{-\frac{\beta}{2}\nnorm{\vv}^2} = \frac{1}{Z_H}e^{-\beta H(\xx,\vv)}
\end{equation}
where $H(\xx,\vv) = U(\xx) + \frac{1}{2}\nnorm{\vv}^2$ is the Hamiltonian of the system and $Z_H = \iint_{\Omega \times \mathbb{R}^n }e^{-\beta H(\xx,\vv)}\dd \xx \dd \vv$ is the partition function.

Given two metastable sets $A,B\subset \Omega$, we focus on solving the backward Kolmogorov equation with Dirichlet boundary conditions:
\begin{equation}
\left\{
\begin{aligned}
    &\LLL q  = 0 \\
    &q\vert_{\pp A \times \mathbb{R}^n} = 0, \ \ \ q\vert_{\pp B \times \mathbb{R}^n}= 1
\end{aligned}
\right.
    \label{committor}
\end{equation}
We present a decomposition theorem for the operator  $\LLL$. This implies that we cannot cast equation \eqref{committor} as a variational problem with an energy functional depending only on first-order derivatives and zeroth-order terms. Consequently, the Deep Ritz method—which applies directly in the over-damped case—does not extend straightforwardly to the committor in the under-damped case.

\begin{theorem}
Let $\LLL$ , defined in \eqref{LLL}, be an operator acting on the Hilbert space $H^1_{\mu\otimes\pi}(\Omega\times\mathbb{R}^n)$. 
The operator $\LLL$ admits the decomposition
\begin{equation}\label{decomposition}\LLL = \AAA + \gamma \SSS\end{equation}
where, for any sufficiently smooth function $u=u(\xx,\vv)$,
the component operators are defined as:
\begin{equation}\label{A_and_S}
\AAA u := \vv\cdot \nabla_{\xx} u - \nabla_{\xx} U\cdot \nabla_{\vv} u 
\quad \text{and} 
\quad \SSS u := -\vv\cdot \nabla_{\vv} u + \beta^{-1} \Delta_{\vv} u
\end{equation}
Moreover, with respect to the inner product of the weighted space $L^2_{\mu\otimes\pi}(\Omega\times\mathbb{R}^n)$, the operator $\AAA$ is skew-adjoint, while $\SSS$ is self-adjoint and negative semi-definite, i.e., the adjoint operators $\AAA^*$, $\SSS^*$ satisfy:
\begin{equation}
   \AAA^* = -\AAA 
\quad \text{and} 
\quad \SSS^* = \SSS 
\end{equation}
In particular, $\SSS$ acts exclusively on the velocity variables and corresponds exactly to the infinitesimal generator of the following Ornstein–Uhlenbeck process:
\begin{equation} \label{OU-V}
 \dd \VV_t = -\VV_t \dt + \sqrt{2 \beta^{-1}}\dd \bm{W}_t
\end{equation}
\end{theorem}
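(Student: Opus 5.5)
The decomposition \eqref{decomposition} is verified directly by grouping terms in \eqref{LLL}. The terms $\vv\cdot\nabla_{\xx}u - \nabla_{\xx}U\cdot\nabla_{\vv}u$ carry no factor $\gamma$ and form $\AAA u$. The remaining terms $-\gamma\vv\cdot\nabla_{\vv}u + \gamma\beta^{-1}\Delta_{\vv}u$ factor as $\gamma\SSS u$. The identification of $\SSS$ with the generator of \eqref{OU-V} follows from It\^o's formula: the drift $-\vv$ gives the first-order term $-\vv\cdot\nabla_{\vv}$, and the diffusion coefficient $\sqrt{2\beta^{-1}}$ gives $\tfrac12\cdot 2\beta^{-1}\Delta_{\vv} = \beta^{-1}\Delta_{\vv}$. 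Since $\SSS$ involves no $\xx$-derivatives, it acts only on the velocity variables, with $\xx$ a parameter.

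For the adjointness claims, I will work on smooth functions $u,w$ with sufficient decay, say $C_c^\infty(\Omega\times\real^n)$ or smooth functions vanishing on $\pp\Omega\times\real^n$, and extend by density. I will use the weight $\rho = Z_H^{-1}e^{-\beta H}$ with $H = U + \tfrac12\nnorm{\vv}^2$. The key structural facts are $\nabla_{\xx}\rho = -\beta\nabla_{\xx}U\,\rho$ and $\nabla_{\vv}\rho = -\beta\vv\,\rho$.

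\textbf{Skew-adjointness of $\AAA$.} Write $\AAA = \bm{F}\cdot\nabla$, where $\bm{F} = (\vv, -\nabla_{\xx}U)$ is the Hamiltonian vector field. Integrating by parts gives
\[
\int (\AAA u)\,w\,\rho = -\int u\,\AAA w\,\rho - \int u\,w\,\nabla\cdot(\rho\bm{F}).
\]
It then suffices to check that $\nabla\cdot(\rho\bm{F}) = 0$. Both $\nabla_{\xx}\cdot\vv = 0$ and $\nabla_{\vv}\cdot\nabla_{\xx}U = 0$, so $\nabla\cdot(\rho\bm{F}) = \bm{F}\cdot\nabla\rho = \rho\,(-\beta\vv\cdot\nabla_{\xx}U + \beta\nabla_{\xx}U\cdot\vv) = 0$. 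This is just conservation of $H$ along the Hamiltonian flow.

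\textbf{Self-adjointness of $\SSS$.} I will rewrite $\SSS$ in divergence form,
\[
\SSS u = \beta^{-1}\rho^{-1}\nabla_{\vv}\cdot(\rho\nabla_{\vv}u),
\]
which holds because $\nabla_{\vv}\rho = -\beta\vv\rho$. One integration by parts in $\vv$ then gives
\[
\langle \SSS u, w\rangle = -\beta^{-1}\int \rho\,\nabla_{\vv}u\cdot\nabla_{\vv}w.
\]
This expression is symmetric in $u,w$, which gives $\SSS^* = \SSS$. Setting $w=u$ shows it is $\le 0$, so $\SSS$ is negative semi-definite.

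The main obstacle is justifying the integration by parts. In $\vv$ over $\real^n$, Gaussian decay of $\pi$ kills the boundary terms at infinity for polynomially growing functions. In $\xx$, for $\AAA$, boundary terms on $\pp\Omega$ (or on $\pp A\cup\pp B$) must vanish. I would state the result on the domain of functions vanishing there, or with $U\to\infty$ at $\pp\Omega$ and $\Omega=\real^n$ under growth assumptions on $U$. This is the usual formal setting, and the paper's phrase ``sufficiently smooth'' suggests this level of rigor suffices.
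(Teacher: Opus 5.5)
Your proposal is correct and follows essentially the same route as the paper. Both integrate by parts against the weight $e^{-\beta H}$: the cancellation $-\beta\,\vv\cdot\nabla_{\xx}U+\beta\,\nabla_{\xx}U\cdot\vv=0$ (your $\nabla\cdot(\rho\bm{F})=0$) gives $\AAA^*=-\AAA$, and the divergence form of $\SSS$ gives the symmetric, nonpositive Dirichlet form $-\beta^{-1}\int\rho\,\nabla_{\vv}u\cdot\nabla_{\vv}w$. You also make explicit three things the paper leaves implicit: the term grouping behind the decomposition, the It\^o identification of the OU generator, and the boundary-term caveat.
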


\begin{proof}
    For sufficiently smooth test functions $f=f(\xx,\vv)$ and $g=g(\xx,\vv)$, 
    since $\mu \otimes \pi =Z_H^{-1} e^{-\beta H(\xx,\vv)}$ with $H(\xx,\vv) = U(\xx) + \frac{1}{2}\nnorm{\vv}^2$, we calculate by integration-by-part
\begin{equation}
\begin{aligned}
    \langle f,\AAA g \rangle_{\mu\otimes \pi} &= \frac{1}{Z_H}\iint_{\Omega\times \mathbb{R}^{n}} f(\xx,\vv)\,\left(\vv\cdot \nabla_{\xx} g(\xx,\vv) - \nabla_{\xx} U(\xx)\cdot \nabla_{\vv}g(\xx,\vv) \right)e^{-\beta H(\xx,\vv)}\dd \xx \dd\vv \\
    &= -\frac{1}{Z_H}\iint_{\Omega\times \mathbb{R}^{n}} \left( \nabla_{\xx} \cdot(\vv fe^{-\beta H}) - \nabla_{\vv}\cdot ( \nabla_{\xx}U f e^{-\beta H}) \right)g\,\dd \xx \dd\vv \\
    &= -\frac{1}{Z_H}\iint_{\Omega\times \mathbb{R}^{n}} \left( \vv\cdot\nabla_{\xx}
    (fe^{-\beta H}) -  \nabla_{\xx}U\cdot    \nabla_{\vv} (fe^{-\beta H} ) \right) g\, \dd \xx \dd\vv \\
    &= -\frac{1}{Z_H}\iint_{\Omega\times \mathbb{R}^{n}} \left( \vv\cdot \big(\nabla_{\xx}
    f -\beta f \nabla_{\xx} U\big) -  \nabla_{\xx} U \cdot  \big(    \nabla_{\vv} f -\beta f \vv  \big) \right) g\, e^{-\beta H}\, \dd \xx \dd\vv 
    \\
    &= -\langle \AAA f, g \rangle_{\mu\otimes \pi}
\end{aligned}
\end{equation}
The conclusion about 
$\SSS$ is obvious since it is the infinitesimal operator of \eqref{OU-V}. Specifically, 
by integration-by-part for $\vv$ only, we have
\begin{equation}
\begin{aligned}
    \langle f,\SSS g \rangle_{\mu\otimes \pi} &= \frac{1}{Z_H}\int_{\Omega}\left(\int_{ \mathbb{R}^{n}} f(\xx,\vv) \left(-\vv\cdot \nabla_{\vv} g(\xx,\vv) + \beta^{-1} \Delta_{\vv} g(\xx,\vv)\right) \,e^{-\beta H(\xx,\vv)}\dd \vv \right) \dd\xx \\
    &= \frac{1}{Z_H}\int_{\Omega}\left(\int_{ \mathbb{R}^{n}} f\,  \left(-\nabla_\vv H \cdot \nabla_{\vv} g + \beta^{-1} \Delta_{\vv} g\right) \,e^{-\beta H}\dd \vv \right) \dd\xx 
    \\
    &= \frac{1}{\beta Z_H}\int_{\Omega}\left(\int_{ \mathbb{R}^{n}} f \, \nabla_{\vv}\cdot \left(\nabla_{\vv}g\ e^{-\beta H}\right)\, \dd \vv \right) \dd\xx \\
    &= -\frac{1}{\beta Z_H}\int_{\Omega}\left(\int_{ \mathbb{R}^{n}}\nabla_{\vv} f \cdot \nabla_{\vv}g\ e^{-\beta H}\dd \vv \right) \dd\xx \\
    &= \frac{1}{\beta Z_H}\int_{\Omega}\left(\int_{ \mathbb{R}^{n}}g\, \nabla_{\vv}\cdot (\nabla_{\vv}f\ e^{-\beta H})\dd \vv \right) \dd\xx \\
    &= \langle \SSS f, g \rangle_{\mu\otimes \pi}
\end{aligned}
\end{equation}
\end{proof}

\subsection{Iterative Neural Variational Scheme}

Since the infinitesimal generator $\LLL$ admits the decomposition \eqref{decomposition},  
it is natural to consider the following  iterative scheme, which belongs to the  
CGW splitting method for large non-symmetric linear systems \cite{concus1976,Widlund1978}:
\begin{equation}\label{itr_naive}
     \left\{
     \begin{aligned}
         & (\omega_1 \III-\gamma \SSS) q^{k+1} = (\omega_1 \III + \AAA) q^{k}\\
         & q^{k+1}(\xx,\vv) = 0 ,\ \ \forall \xx\in \pp A,\ \vv \in \real^{n}\\
         & q^{k+1}(\xx,\vv) = 1 ,\ \ \forall \xx\in \pp B,\ \vv \in \real^{n}
     \end{aligned}
     \right.
 \end{equation}
 where $\omega_1>0$ is a hyperparameter serving as regularization.
 The traditional CGW for linear systems adopts the 
 generalized Conjugate Gradient  algorithm by utilizing the symmetry of $\SSS$. Here for the infinite dimensional problem,  this  update \(q^{k+1}\) could be computed  in the following variational way since \(\SSS\) is negative semi-definite with respect to the weighted inner product
$\pi$:
\begin{equation}\label{variational_form_itr_naive}
\left\{
    \begin{aligned}
        &q^{k+1} = \mathop{\arg\min}\limits_{f\in H^{1}_{\mu\otimes\pi}(\Omega\times \real^{n})} \ \frac{1}{2} \beta^{-1} \langle \nabla_{\vv} f,\nabla_{\vv} f \rangle_{\mu\otimes \pi} + \omega_1\langle f,f \rangle_{\mu\otimes\pi} - \langle \omega_1 q^k + \AAA q^{k}, f \rangle_{\mu\otimes \pi} \\
        &\text{s.t.} ~ \ f|_{\pp A \times \real^{n}} = 0,\ f|_{\pp B\times \real^{n}} = 1
    \end{aligned}
\right.
\end{equation}

The primary bottleneck of the CGW method is the optimal selection of the shift parameter $\omega_1$~\cite{Benzi2022}. Choosing an $\omega_1$ that is too small causes the iteration operator's spectral radius to exceed $1$ and diverge, while an excessively large value forces it to approach $1$ from below, severely stagnating the algorithm. Navigating this parameter sensitivity is exceptionally difficult because the theoretically optimal $\omega_1$ depends strictly on the extreme eigenvalues of both the Hermitian and skew-Hermitian operators. In the context of large-scale, high-dimensional systems, computing these exact spectral bounds is more computationally prohibitive than solving the linear system itself.

To overcome this fundamental bottleneck, we generalize the classic CGW framework by replacing the standard scalar shift with an operator-valued formulation. Specifically, we consider the following iterative scheme:
\begin{equation}\label{itr}
\left\{
\begin{aligned}
&\KKK q^{k+1} - \gamma \SSS q^{k+1} = \KKK q^{k} +\AAA q^{k}\\
& q^{k+1}(\xx,\vv) = 0 ,\ \ \forall \xx\in \pp A,\ \vv \in \real^{n}\\
& q^{k+1}(\xx,\vv) = 1 ,\ \ \forall \xx\in \pp B,\ \vv \in \real^{n}
\end{aligned}
\right.
\end{equation}
Rather than relying on a simple scalar multiple of the identity, we design the shift operator $\KKK$ to incorporate macroscopic spatial relaxation dynamics, defined as:
\begin{equation}\label{KKK}
\KKK=\omega_{1}\III-\omega_{2}\KKK_{\xx}
\end{equation}
where $\omega_{1},\omega_{2}>0$ serve as tuning parameters, and $\III$ denotes the identity operator. Furthermore, $\KKK_{\xx}$ is the spatial operator motivated by the over-damped Langevin generator associated with an auxiliary potential $\widetilde{U}(\xx)$, given by:
\begin{equation}\label{kx}
\KKK_{\xx}f=-\nabla_{\xx}\widetilde{U}\cdot\nabla_{\xx}f+\beta^{-1}\,\Delta_{\xx}f
\end{equation}
 With this formulation, the combined operator $\KKK-\gamma\SSS$ is coercive and positive definite under the weighted product measure
\begin{equation}\label{tilde_mu_otimes_pi}
\widetilde{\mu}\otimes\pi\propto \exp\!\Big(-\beta\big(\widetilde{U}(\xx)+\tfrac{1}{2}\lVert \vv\rVert^{2}\big)\Big)
\end{equation}
and we can derive an equivalent variational formulation for the subproblem \eqref{itr}.

\begin{theorem}
Let $\omega_1, \omega_2 > 0$ be given parameters in \eqref{KKK}, and let $\tilde{\mu}$ and $\pi$ be the measures as defined in \eqref{tilde_mu_otimes_pi}. Define the following two bilinear forms:
\begin{equation}
\BBB_2(f,g) =\ \  \omega_1 \langle f,g \rangle_{\tilde{\mu}\otimes \pi} + \gamma\beta^{-1}\langle \nabla_{\vv} f,\nabla_{\vv} g \rangle_{\tilde{\mu}\otimes \pi}\ \ +\omega_2\beta^{-1}\langle \nabla_{\xx} f,\nabla_{\xx} g \rangle_{\tilde{\mu}\otimes \pi}
\label{BK}
\end{equation}
\begin{equation}
\BBB_1(f,g) = \langle  \AAA f,g \rangle_{\tilde{\mu}\otimes \pi}             -\gamma\beta^{-1}\langle  \nabla_{\vv} f,\nabla_{\vv} g \rangle_{\tilde{\mu}\otimes \pi}
\end{equation}
Then, $q^{k+1}$ solves the operator equation~\eqref{itr} if and only if it is the unique minimizer of the following optimization problem:
\begin{equation}\label{variational}
\left\{
\begin{aligned}
&q^{k+1}=\mathop{\arg\min}_{f\in H^{1}_{\widetilde{\mu}\otimes\pi}(\Omega\times\mathbb{R}^{n})}
\;\tfrac{1}{2}\,\BBB_{2}(f-q^{k},f-q^{k})-\BBB_{1}(q^{k},f)\\
&\text{s.t.}\quad f|_{\partial A\times \real^{n}}=0,\; f|_{\partial B\times \real^{n}}=1
\end{aligned}
\right.
\end{equation}
\end{theorem}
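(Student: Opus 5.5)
The plan is to show that the Euler–Lagrange (weak) form of the functional in \eqref{variational} coincides with the weak form of \eqref{itr} tested against functions vanishing on $\pp A\times\real^n$ and $\pp B\times\real^n$. Uniqueness then follows from the coercivity of $\BBB_2$. The computation is done entirely in $L^2_{\widetilde{\mu}\otimes\pi}$ rather than $L^2_{\mu\otimes\pi}$, since the test weight is now $\widetilde{\mu}\otimes\pi$.

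\textbf{Step 1: the left-hand operator in weak form.} Let $V_0$ denote the subspace of $H^1_{\widetilde{\mu}\otimes\pi}(\Omega\times\real^n)$ of functions vanishing on $(\pp A\cup\pp B)\times\real^n$. For sufficiently smooth $f$ and $g\in V_0$ I will show
\[
\langle (\KKK-\gamma\SSS) f, g\rangle_{\widetilde{\mu}\otimes\pi}=\BBB_2(f,g).
\]
\begin{itemize}
\item The $\SSS$ term: $\SSS$ acts only on $\vv$, and the $\vv$-marginal of $\widetilde{\mu}\otimes\pi$ is still $\pi$. So the integration by parts in $\vv$ from the proof of the decomposition theorem carries over verbatim with $\mu$ replaced by $\widetilde{\mu}$, giving $-\gamma\langle\SSS f,g\rangle=\gamma\beta^{-1}\langle\nabla_\vv f,\nabla_\vv g\rangle$. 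Boundary terms at $|\vv|\to\infty$ vanish because of the Gaussian factor.
\item The $\KKK_\xx$ term: $\KKK_\xx$ is the overdamped generator for $\widetilde U$. Using $\nabla_\xx\cdot(\nabla_\xx f\,e^{-\beta\widetilde U})=\beta\,e^{-\beta\widetilde U}\KKK_\xx f$ and integrating by parts in $\xx$ gives $-\omega_2\langle\KKK_\xx f,g\rangle=\omega_2\beta^{-1}\langle\nabla_\xx f,\nabla_\xx g\rangle$. The boundary terms on $\pp A$ and $\pp B$ drop because $g\in V_0$; those at the outer boundary of $\Omega$ (or at infinity) drop by decay of $e^{-\beta\widetilde U}$.
\item The identity term contributes $\omega_1\langle f,g\rangle$.
\end{itemize}

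\textbf{Step 2: the right-hand side.} Applying Step 1 to $f=q^k$ gives
\[
\langle(\KKK+\AAA)q^k,g\rangle=\BBB_2(q^k,g)-\gamma\beta^{-1}\langle\nabla_\vv q^k,\nabla_\vv g\rangle+\langle\AAA q^k,g\rangle=\BBB_2(q^k,g)+\BBB_1(q^k,g).
\]
No skew-adjointness of $\AAA$ is needed, and indeed it fails for $\widetilde{\mu}\ne\mu$, because $\AAA$ only ever appears applied to $q^k$. Hence \eqref{itr}, tested against $g\in V_0$, is equivalent to
\[
\BBB_2(q^{k+1}-q^k,g)=\BBB_1(q^k,g)\qquad\forall g\in V_0.
\]
This holds with the Dirichlet data imposed on $q^{k+1}$.

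\textbf{Step 3: the variational side.} Let $J(f)=\tfrac12\BBB_2(f-q^k,f-q^k)-\BBB_1(q^k,f)$. For $f$ in the admissible affine set and $g\in V_0$,
\[
\frac{d}{d\epsilon}J(f+\epsilon g)\Big|_{\epsilon=0}=\BBB_2(f-q^k,g)-\BBB_1(q^k,g).
\]
This reproduces exactly the weak equation of Step 2, using that $\BBB_2$ is symmetric.

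\textbf{Step 4: existence and uniqueness.} Since $\omega_1,\omega_2,\gamma>0$, we have $\BBB_2(f,f)\ge\min(\omega_1,\omega_2\beta^{-1},\gamma\beta^{-1})\nnorm{f}^2_{H^1_{\widetilde{\mu}\otimes\pi}}$. So $\BBB_2$ is a symmetric, bounded, coercive inner product on $V_0$, and $J$ is strictly convex on the affine set. If $g\mapsto\BBB_1(q^k,g)$ is a bounded functional on $V_0$, Lax–Milgram (equivalently, the Riesz representation in the $\BBB_2$-inner product, after shifting by any $H^1$ lift of the boundary data) yields a unique minimizer. This minimizer is characterized precisely by the weak equation, which gives the "if and only if".

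\textbf{Main obstacle.} The delicate points are functional-analytic rather than algebraic. First, $\BBB_1(q^k,\cdot)$ must be bounded; this requires $\AAA q^k\in L^2_{\widetilde{\mu}\otimes\pi}$, which I would take as a regularity assumption on the previous iterate, e.g.\ $q^k$ smooth with controlled growth in $\vv$. Second, all boundary terms in the integrations by parts must vanish, which needs suitable growth of $\widetilde U$ and a well-defined trace on $\pp A\times\real^n$ and $\pp B\times\real^n$. Third, the equivalence is understood in the weak sense; upgrading it to the strong equation \eqref{itr} would require elliptic regularity for $\KKK-\gamma\SSS$. I would state explicitly that "solves \eqref{itr}" means the weak formulation.
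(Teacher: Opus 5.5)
Your proposal is correct and follows essentially the paper's route. Both compute the first variation of $\tfrac12\BBB_2(f-q^k,f-q^k)-\BBB_1(q^k,f)$, use the $\widetilde{\mu}\otimes\pi$-weighted integration-by-parts identities for $\KKK_{\xx}$ and $\SSS$ to identify it with $\langle(\KKK-\gamma\SSS)q^{k+1}-(\KKK+\AAA)q^k,u\rangle$, and get uniqueness from coercivity of $\BBB_2$. You go somewhat further than the paper, which only argues ``classical solution $\Rightarrow$ unique minimizer'', by also giving the converse in weak form and existence via Lax--Milgram. Your remark that skew-adjointness of $\AAA$ is never used, and in fact fails for $\widetilde{\mu}\neq\mu$, is accurate.
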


\begin{proof}
Suppose that $q^{k+1}$ is a classical solution to the operator equation~\eqref{itr}. We introduce the variational functional:
\begin{equation}
I_k[f] := \frac{1}{2}\BBB_2(f-q^k,f-q^k) - \BBB_1(q^k,f)
\end{equation}

Let $\widetilde{\Omega} := \Omega\setminus (A\cup B)$. 
For any function $f,g \in H_{0,\tilde{\mu}\otimes\pi}^1(\widetilde{\Omega}\times\mathbb{R}^n)$, we first recall the definitions of $\KKK_x$ and $\SSS$ from equations~\eqref{kx} and~\eqref{A_and_S} and apply integration-by-parts to obtain:
\begin{equation}
\langle \nabla_{\xx} f, \nabla_{\xx} g \rangle_{\tilde{\mu}\otimes\pi} = -\beta\langle \KKK_x f, g \rangle_{\tilde{\mu}\otimes\pi}
\label{eq:ibp-x}
\end{equation}
and
\begin{equation}
\langle \nabla_{\vv} f, \nabla_{\vv} g \rangle_{\tilde{\mu}\otimes\pi} = -\beta\langle \SSS f, g \rangle_{\tilde{\mu}\otimes\pi}
\label{eq:ibp-v}
\end{equation}

For any test function $u \in H_{0,\tilde{\mu}\otimes\pi}^1(\widetilde{\Omega}\times\mathbb{R}^n)$, we now compute the Gâteaux directional derivative of $I_k$ at $q^{k+1}$ in the direction $u$:
\begin{equation}
\begin{aligned}
\frac{\dd}{\dd \epsilon}\bigg|_{\epsilon=0} I_k[q^{k+1}+\epsilon u] 
&= \BBB_2(q^{k+1}-q^k, u) - \BBB_1(q^k, u)\\
&= \omega_1 \langle q^{k+1}-q^k, u \rangle_{\tilde{\mu}\otimes \pi} + \gamma\beta^{-1}\langle \nabla_{\vv}(q^{k+1}-q^k), \nabla_{\vv} u \rangle_{\tilde{\mu}\otimes \pi}\\
&\quad + \omega_2\beta^{-1}\langle \nabla_{\xx}(q^{k+1}-q^k), \nabla_{\xx} u \rangle_{\tilde{\mu}\otimes \pi}\\
&\quad - \langle \AAA q^k, u \rangle_{\tilde{\mu}\otimes \pi} + \gamma\beta^{-1}\langle \nabla_{\vv} q^k, \nabla_{\vv} u \rangle_{\tilde{\mu}\otimes \pi}
\end{aligned}
\end{equation}

Applying the integration-by-parts formulas~\eqref{eq:ibp-x} and~\eqref{eq:ibp-v}, we obtain:
\begin{equation}
\begin{aligned}
\frac{\dd}{\dd \epsilon}\bigg|_{\epsilon=0} I_k[q^{k+1}+\epsilon u] 
&= \int_{\widetilde{\Omega}\times\mathbb{R}^n} u\left[\omega_1 (q^{k+1}-q^k) - \gamma \SSS(q^{k+1}-q^k) \right.\\
&\quad\left. - \omega_2\KKK_x(q^{k+1}-q^k) - \AAA q^k - \gamma \SSS q^k\right] (\tilde{\mu}\otimes\pi)  \, \dd x \, \dd v\\
&= -\int_{\widetilde{\Omega}\times\mathbb{R}^n}  u \left(\KKK q^{k+1} - \gamma \SSS q^{k+1} - \KKK q^k - \AAA q^k\right) (\tilde{\mu}\otimes\pi) \, \dd x \, \dd v
\end{aligned}
\end{equation}

Since $q^{k+1}$ satisfies equation~\eqref{itr}, the expression in parentheses vanishes for all $u$, and thus
\begin{equation}
\frac{\dd}{\dd \epsilon}\bigg|_{\epsilon=0} I_k[q^{k+1}+\epsilon u] = 0 \quad \text{for all admissible } u
\end{equation}

This shows that $q^{k+1}$ is a critical point of $I_k[\cdot]$ subject to the boundary constraints in problem~\eqref{variational}. Since $\BBB_2(\cdot,\cdot)$ is strictly positive definite and coercive, the functional $I_k$ is strictly convex. Therefore, $q^{k+1}$ is the unique minimizer of the optimization problem~\eqref{variational}.
\end{proof}

The variational problem~\eqref{variational} is well-posed and amenable to numerical solution via the Deep Ritz method. Specifically, the expectations within the bilinear forms can be efficiently approximated using Monte Carlo estimation, while the Dirichlet boundary conditions are weakly enforced via a penalty method. The boundary penalty functional is defined as:\begin{equation}L_b[f] = \frac{\lambda_b}{N_A}\sum_{\xx_i\in \pp A,\vv_i\in \real^n} |f(\xx_i,\vv_i)|^2 + \frac{\lambda_b}{N_B}\sum_{\xx_i\in \pp B,\vv_i\in \real^n} |f(\xx_i,\vv_i)-1|^2\end{equation}where $\lambda_b>0$ is the penalty weight, and $N_A$, $N_B$ denote the number of collocation points sampled on the boundaries $\pp A$ and $\pp B$, respectively. For a neural network ansatz $f$ parameterizing the candidate solution for the next iterate, the total empirical loss function to be minimized is formulated as:\begin{equation}L[f] = \tfrac{1}{2}\,\BBB_{2}(f-q^{k},f-q^{k})-\BBB_{1}(q^{k},f) + L_b[f]\end{equation}

We emphasize that the auxiliary potential $\widetilde{U}$ and the tuning parameters $\omega_1, \omega_2$ serve two critical mathematical and computational purposes.
First, by explicitly embedding the spatial $\xx$-dependence into the shift operator $\KKK$, the generalized CGW framework achieves a significantly accelerated convergence rate and substantially greater robustness in parameter selection.
Second, the introduction of $\widetilde{U}$ establishes a flexible framework for enhanced sampling. By carefully designing $\widetilde{U}$ alongside the relative weights $\omega_1$ and $\omega_2$, one can bias the Monte Carlo sampling toward transition regions in the configurational space that are most vital for an accurate committor approximation. The explicit construction of $\widetilde{U}$ and practical heuristics for parameter selection are thoroughly detailed in Section~\ref{tilde-U}.

Remark. The Hermitian and Skew-Hermitian Splitting (HSS) method\cite{HSS-BGN2023} serves as an unconditionally convergent generalization of the classical CGW scheme. Each iteration consists of two distinct half-steps: one requiring the solution of a self-adjoint operator equation, and the other requiring the solution of a skew-adjoint operator equation. However, directly extending the HSS framework to deep learning applications presents two fundamental mathematical challenges. First, exactly solving the high-dimensional skew-adjoint operator equation is computationally prohibitive when the solution is parameterized by a neural network. Second, the spectrum of the dissipation operator $-\SSS$ is highly degenerate, encompassing $0$ and extending to $+\infty$. This inherent lack of strict coercivity forces the spectral radius of the iteration operator to be exactly $1$, thereby invalidating the method's theoretical guarantee of unconditional convergence. Consequently, the adaptation of the HSS method to this context remains a subject for future work.

\subsection{Enhanced sampling strategy\label{tilde-U}}
 Sampling the rare transition region is a major challenge for committor function calculation, as documented in the literature \cite{lilinren2019,lin2024deep,Kang2024,Trizio2025}. In our work, by allowing \(\widetilde{U}\) to be chosen freely (rather than using the system’s potential), we obtain a flexible strategy to bias sampling toward dynamically important regions and thereby improve learning efficiency.

We consider two distinct strategies for choosing $\widetilde{U}$ and designing adaptive sampling of the associated measure $\tilde{\mu}$:
 First, we bias the potential to place more samples in the transition region where the committor varies rapidly. For example, recent work proposes adding a bias proportional to \(-\beta^{-1}\log\lVert\nabla_{\xx} q\rVert^{2}\) to concentrate sampling where the spatial gradient of the committor is large \cite{Kang2024}. Second, residual‑based adaptive sampling~\cite{WU2023} targets points where the PDE residual is large. Since the committor is nearly constant near the metastable basins \(A\) and \(B\) (yielding small residuals) but changes steeply across the transition region, residual‑driven refinement naturally focuses training data in the most informative region. In summary, this work utilizes the formulation $\widetilde{U}^{k+1} \propto \log\big\{{\beta_{\text{res}}W_{\text{res}} - \beta_{q^k} \|\nabla_\xx q^{k}\| + \beta_{\text{uniform}}}\big\}$, where $W_{\text{res}}$ is the weight function derived from the PDE residue, $q^k$ denotes the committor estimate from the previous iteration, and the constants $\beta_{\text{res}}$, $\beta_{q^k}$, and $\beta_{\text{uniform}}$ are tunable hyperparameters.

A critical practical advantage of our framework is that both bilinear forms $\BBB_2$ and $\BBB_1$ require \emph{no explicit analytical formula} for $\widetilde{U}$. They depend solely on samples drawn from the measure $\tilde{\mu}\otimes\pi$. This decoupling provides substantial flexibility: one can freely design, test, and update sampling strategies without needing to evaluate or differentiate $\widetilde{U}$ analytically.

Furthermore, the product structure of the measure $\tilde{\mu}\otimes\pi$ yields a crucial decomposition: the spatial and velocity variables decouple, with the velocity marginal remaining Gaussian. This decomposition enables inexpensive independent resampling of velocities at each iteration while maintaining concentrated position sampling in the transition region. The benefits are twofold:
\begin{enumerate}
    \item \emph{Generalization:} The network encounters a diverse set of velocity realizations, improving robustness across the velocity space.
    \item \emph{Efficiency:} Costly position sampling (which often requires specialized techniques to target the transition region) can be reused across multiple velocity samples.
\end{enumerate}

In summary, the freedom in choosing $\widetilde{U}$, combined with the separability of $\tilde{\mu}\otimes\pi$, provides a powerful and practical toolkit for overcoming the sampling bottleneck in committor computation. These design principles are realized concretely in Algorithm~\ref{framework} below.

\begin{algorithm}
\caption{Iterative Solver For Committor Function\label{framework}}
\begin{algorithmic}[1]
\State \textbf{Input:} Initial approximation $q^0$, tolerance $\varepsilon > 0$, maximum iterations $K_{\max}$, biased measure $\tilde{\mu} \propto \exp(-\beta \widetilde{U})$
\State Initialize $k \gets 0$, $q^{(k)} \gets q^0$
\While{$k < K_{\max}$}
    \State \textbf{Sampling Phase:}
    \State \quad Sample velocities $\{\vv_i\}_{i=1}^{N_v} \sim \pi(\vv)$ independently
    \State \quad Sample positions $\{\xx_j\}_{j=1}^{N_x} \sim \tilde{\mu}(\xx)$ from the biased measure
    
    \State \textbf{Training Phase:}
    \State \quad Train neural network to minimize:
    \begin{equation*}
        L[q] = \tfrac{1}{2}\BBB_2(q - q^{(k)}, q - q^{(k)}) - \BBB_1(q^{(k)}, q) + L_b[q]
    \end{equation*}
    
    \State \quad Set $q^{(k+1)} \gets \text{optimized network}$
    
    \State \textbf{Convergence Check:}
    \If{$\|q^{(k+1)} - q^{(k)}\|_{L^2(\tilde{\mu}\otimes\pi)} \le \varepsilon$}
        \State \Return $q^{(k+1)}$ 
    \EndIf

    \State \textbf{Optional: Update auxiliary potential}
    \State \quad $\widetilde{U} \gets \text{Updated based on committor gradient}$ 
    
    \State $k \gets k + 1$, $q^{(k)} \gets q^{(k+1)}$

\EndWhile
\State \Return $q^{(k)}$ 
\end{algorithmic}
\end{algorithm}

\subsection{Continuous Limit of the Iterative Scheme}

The iterative scheme~\eqref{itr} admits a natural interpretation as a temporal discretization of a continuous evolution equation for an auxiliary field $\tilde{q}(\xx,\vv,t)$. We regard the stationary PDE \eqref{itr} as the long time limit of the following initial-boundary value problem:
\begin{equation}
     \left\{
     \begin{aligned}
         &\pp_t \tilde{q}(x,v,t) = \widetilde{\KKK}^{-1}\LLL \tilde{q}(x,v,t)\\
         & \tilde{q}(\xx,\vv,t) = 0 ,\ \ \forall \xx\in \pp A,\ \vv \in \real^{n}\\
         &\tilde{q}(\xx,\vv,t) = 1,\ \ \forall \xx\in \pp B,\ \vv \in \real^{n}
     \end{aligned}
     \right.\label{timeevo}
 \end{equation}
where $\LLL = \AAA + \gamma\SSS$ is the infinitesimal generator and $\widetilde{\KKK}$ is a positive-definite, invertible auxiliary operator. Applying the semi-implicit Euler scheme with timestep $\Delta t$ to~\eqref{timeevo} yields:
\begin{equation}
\widetilde{\KKK}\,\tilde{q}^{\,k+1} - \Delta t\, \gamma\, \SSS\, \tilde{q}^{k+1} = \widetilde{\KKK}\,\tilde{q}^{\,k} + \Delta t\,\AAA\,\tilde{q}^{\,k}
\end{equation}
Rearranging and identifying $\KKK := \widetilde{\KKK}/\Delta t$, we recover the discrete iteration~\eqref{itr}. This establishes that the fixed-point iteration~\eqref{itr} provides a time-discrete approximation to the continuous flow generated by $\widetilde{\KKK}^{-1}\LLL$.

A practical question is how to choose the hyperparameters \(\omega_{1},\omega_{2},\widetilde{U}\) (or, equivalently, the operator \(\widetilde{\KKK}\)) to guarantee convergence of the time evolution.  Rather than present a complete spectral analysis, we highlight two key considerations:

\textit{Case 1: Identity auxiliary operator.} The simplest choice $\widetilde{\KKK} = \III$ yields the evolution $\partial_t \tilde{q} = \LLL\tilde{q}$. This evolution is well-posed with long-time asymptotics corresponding to the committor. Consequently, convergence of the discrete scheme follows from standard stability theory for the time integrator (e.g., a sufficiently small timestep $\Delta t$).

\textit{Case 2: Spectral gap condition.} More generally, exponential convergence of the flow generated by $\widetilde{\KKK}^{-1}\LLL$ to its steady state is guaranteed if there exists $r > 0$ such that $\|(\lambda I - \mathcal{L})^{-1}\|$ is uniformly bounded for all $\operatorname{Re}(\lambda) \geq -\gamma$. 

Under this spectral gap condition, the semigroup $e^{t\widetilde{\KKK}^{-1}\LLL}$ decays exponentially with a rate of at least $r$. The time-discrete iteration then converges provided $\Delta t$ respects the stability constraints of the chosen integrator.

Intuitively, larger $\omega_1,\omega_2$ lead to more stable optimization but a lower convergence rate. In practice, numerical parameter tuning guided by operator norm estimates or small-scale spectral computations can help with hyperparameter selection. A fully rigorous automated parameter selection strategy remains an open problem and is left to future investigations.

\subsection{Committor Function in Low-Dimensional Descriptors}\label{function_approximation}

A standard and practically motivated assumption in over-damped Langevin dynamics is that the committor depends primarily on a small set of collective variables (CVs) $s(\xx)$, where $s: \Omega \to \mathbb{R}^d$ with $d \ll n$. This assumption reflects the observation that complex dynamics are often effectively governed by a few slowly-varying degrees of freedom~\cite{Schütte2000,Kang2024,vanden2006towards,liu2026fesfmfreeenergysurface}. Under this assumption, the leading-order committor admits the approximation $q_0(\xx) \approx q_0(s(\xx))$.

Substituting this ansatz into the asymptotic expansion~\eqref{asymptotic} and retaining terms up to $O(\gamma^{-1})$, we obtain:
\begin{equation}\label{asym_CV}
q(\xx,\vv) \approx q_0(s(\xx)) + \gamma^{-1}\vv \cdot J_s(\xx) \nabla_s q_0(s(\xx)) + O(\gamma^{-2})
\end{equation}
where $J_s(\xx) = (\partial s_j/\partial x_i)$ denotes the Jacobian of the CV map. The correction term admits a natural physical interpretation: noting that $\vv = \dot{\xx}$ and $J_s^T \vv = \dot{s}$, equation \eqref{asym_CV} can be rewritten as
\begin{equation}\label{dot_s}
q(\xx,\vv) \approx q_0(s(\xx)) + \gamma^{-1}\dot{s} \cdot \nabla_s q_0(s(\xx)) + O(\gamma^{-2})
\end{equation}
This represents a directional derivative of the coarse-grained committor along the instantaneous CV trajectory, making explicit the coupling between configurational and velocity dependence.

The asymptotic structure in \eqref{dot_s}  motivates a reduced-variable parametrization:
\begin{equation}
q(\xx,\vv) \approx Q(s(\xx), \dot{s}(\xx)) = Q(s(\xx), J_s^T(\xx)\vv)
\label{eq:reduced_param}
\end{equation}
where  $Q:\real^{d}\times\real^{d}\to [0,1]$ is a parametrized neural network.
This captures the most dynamically relevant dependence on configuration and instantaneous coarse-grained velocity. This choice offers three significant advantages. 

\emph{First, dimensionality reduction:} When the dynamics are effectively governed by a few CVs, this parametrization dramatically reduces the function-approximation dimension compared to working in the full $(\xx,\vv)$ space. This reduction substantially improves sample efficiency and simplifies the complexity of neural network or spectral representations, enabling faster convergence and more stable training.

\emph{Second, physical interpretability:} Expressing the ansatz in terms of $s$ and $\dot{s}$ makes the physical structure of the committor explicit. The leading-order term depends only on CV position and encodes the configurational likelihood of reaching $B$ from $A$ and the $O(\gamma^{-1})$ correction accounts for the instantaneous motion along those CVs.

\emph{Third, symmetry preservation:} If the chosen CVs respect physical symmetries (e.g., rotational or translational invariance), the reduced representation automatically inherits those symmetries. This ensures that the learned committor is consistent with the underlying physics, improves generalization beyond the training data, and can significantly reduce the number of parameters needed for an accurate approximation.

The reduced parametrization~\eqref{eq:reduced_param} thus provides both a mathematically principled and physically interpretable framework for efficient committor approximation in high-dimensional systems.

 \section{Numerical Experiments}
 In this section we evaluate the performance of our Iterative Neural Variational Solver on a set of benchmark problems. All neural network models are implemented using the PyTorch framework~\cite{paszke2019pytorch}. We parameterize the approximated solution, $q_\theta$, using a fully connected feedforward architecture. To ensure the network is sufficiently smooth to compute the necessary derivatives for the governing operators, we employ the Softplus activation function across all hidden layers. Furthermore, a Sigmoid activation is applied at the output layer to strictly bound the predicted committor values within the physically meaningful probability interval of $[0, 1]$. The network weights are optimized using the Adam algorithm~\cite{kingma2014adam} to minimize the associated loss function. The specific hyperparameter configurations used for this training process are detailed in \ref{details}.

 \subsection{Double well potential}
 
    Firstly we consider the potential~\cite{khoo2019solving} in $\real^{d}$
    \begin{equation}
        U(x) = (x_1^2-1)^2 + 0.3\sum_{i=2}^d x_i^2
    \end{equation}
    and 
    $$A=\{x \in \mathbb{R}^d: x_1\leq-1\}\ \ \ B=\{x\in \mathbb{R}^d:x_1\geq1\}$$
The true committor function $q(\xx,\vv) \in \real^{d}\times\real^{d}$, while
this problem is effectively one-dimensional in configuration space, and the true committor is effectively two-dimensional in the $(x_1, v_1)$ variables, i.e. $q(\xx,\vv) = q(x_1,v_1)$. While the over-damped case admits an explicit analytical solution for the committor, the under-damped Langevin setting does not. We therefore compute a reference solution using the finite difference method (FDM) and test our iterative neural solver with $\beta = 0.1$ and dimension $d = 3$.

\begin{figure}
\includegraphics[width=\linewidth]{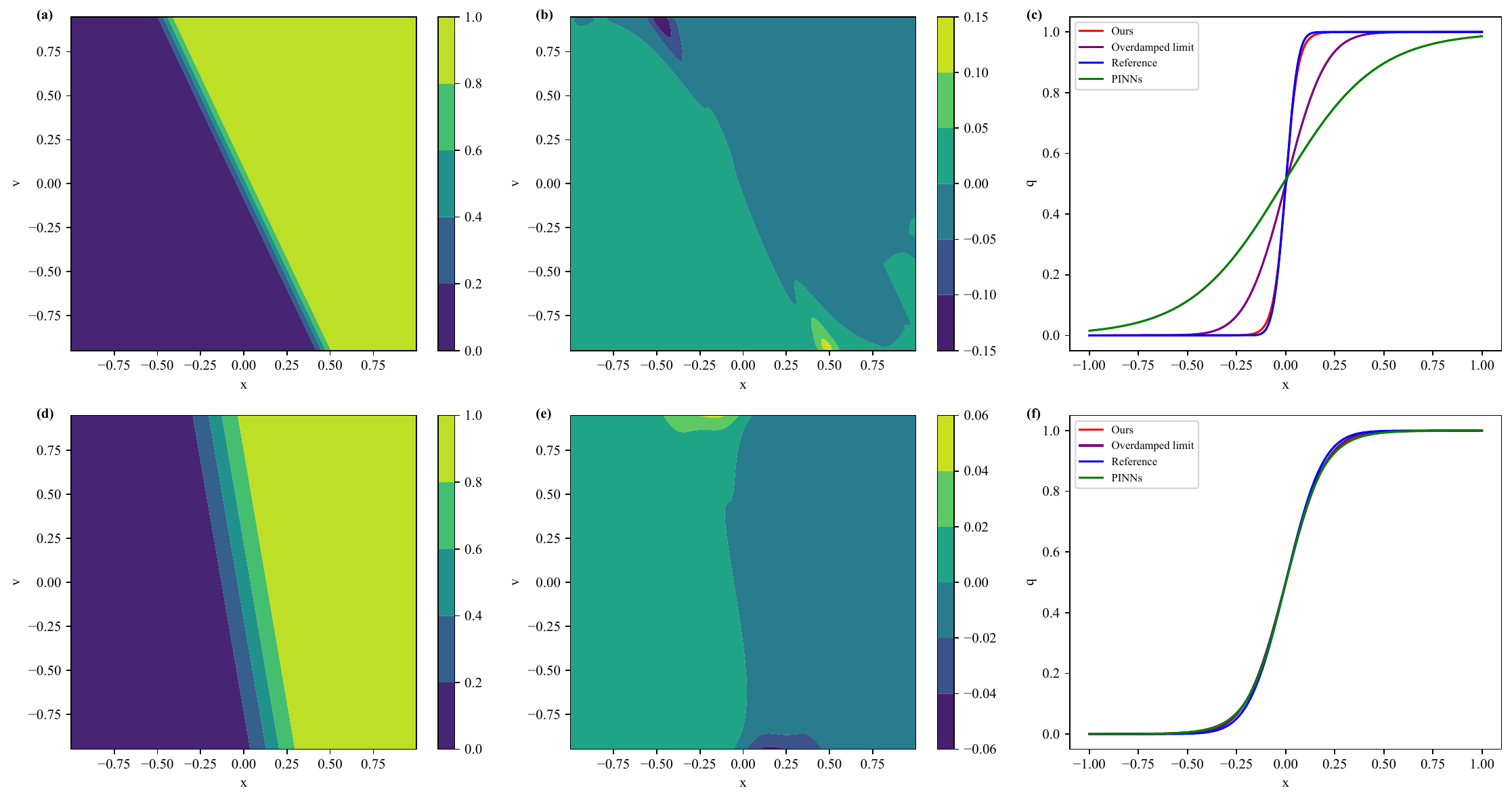}
\caption{\label{1d-results}
Results for the double-well potential with two friction regimes: panels (a--c) correspond to $\gamma = 0.2$ (weakly damped), while panels (d--f) correspond to $\gamma = 5$ (strongly damped). Panels (a) and (d): Learned committor $q_\theta(x_1, v_1)$ with other coordinates fixed. Panels (b) and (e): Pointwise error $Err = q_\theta - q_{\text{ref}}$ between learned and reference solutions. Panels (c) and (f): Committor functions along $x_1$ at a fixed velocity slice $v_1 = 0$, comparing our method with PINN and the FDM reference. Both methods train for 9000 epochs.}
\end{figure}

Fig.~\ref{1d-results}(a) and~\ref{1d-results}(d) display the learned committor for $\gamma = 0.2$ and $\gamma = 5$, respectively, as a function of $(x_1, v_1)$ with remaining coordinates held at fixed values. The corresponding pointwise errors are shown in Fig.~\ref{1d-results}(b) and~\ref{1d-results}(e). A key observation emerges: when $\gamma$ is small, velocity significantly influences the committor; as $\gamma \to \infty$, the committor approaches a function of $x$ only, recovering the over-damped limit. This behavior is in full agreement with the asymptotic analysis of committor asymptotics in under-damped Langevin dynamics.

Fig.~\ref{1d-results}(c) and~\ref{1d-results}(f) compare committors along $x_1$ at $v_1 = 0$ across different methods. In the strongly damped regime ($\gamma = 5$), all methods perform comparably well, and the under-damped committor converges toward the over-damped committor as predicted. However, in the weakly damped regime ($\gamma = 0.2$), PINNs exhibit a substantial loss of accuracy, whereas our method maintains fidelity to the correct solution features. Quantitative error metrics for our iterative neural solver are reported in Table~\ref{1d-errors}.

\begin{table}[H]
    \caption{Error analysis for committor approximation in the double-well potential. $Err = q_\theta - q_{\text{ref}}$ is the difference between the learned neural network solution and a reference solution computed via FDM. Error norms are computed with respect to the measure $\mu\otimes\pi$. }
    \label{1d-errors}
    \centering
    \begin{tabular}{@{} l |l| l l l| l l l @{}}
        \toprule
        $\gamma$ & Method & $\|Err\|_{L_2}$ & $\|Err\|_{L_1}$ & $\|Err\|_{L_\infty}$ & $\|\LLL q_\theta\|_{L_2}$ & $\|\LLL q_\theta\|_{L_1}$ & $\|\LLL q_\theta\|_{L_\infty}$ \\
        \midrule
        0.2 & Ours & \textbf{2.91e-4} & \textbf{2.85e-6} & \textbf{1.60e-1} & \textbf{2.47e-3} & \textbf{1.85e-5} & 1.39 \\
        & PINN & 2.82e-2 & 2.52e-2 & 7.39e-1 & 3.45e-2 & 2.53e-2 & \textbf{7.39e-1} \\
        \midrule
        1 & Ours & \textbf{2.65e-4} & \textbf{2.62e-5} & \textbf{7.94e-2} & \textbf{1.19e-3} & \textbf{8.25e-5} & \textbf{4.10e-1} \\
        & PINN & 3.89e-2 & 1.52e-2 & 9.96e-1 & 5.26e-2 & 1.85e-2 & 9.96e-1 \\
        \midrule
        5 & Ours & \textbf{5.74e-4} & \textbf{2.32e-4} & \textbf{6.04e-2} & \textbf{5.75e-4} & \textbf{1.68e-4} & \textbf{6.04e-2} \\
        & PINN & 6.13e-4 & 2.38e-4 & 6.91e-2 & 6.03e-4 & 1.76e-4 & 6.91e-2 \\
        \midrule
        25 & Ours & \textbf{8.60e-4} & 4.44e-4 & \textbf{5.31e-2} & \textbf{3.12e-4} & 1.33e-4 & \textbf{5.31e-2} \\
        & PINN & 9.35e-4 & \textbf{4.04e-4} & 1.86e-1 & 3.30e-4 & \textbf{1.18e-4} & 1.86e-1 \\
        \midrule
        100 & Ours & \textbf{1.38e-3} & \textbf{7.65e-4} & \textbf{7.89e-2} & \textbf{7.14e-4} & \textbf{3.08e-4} & \textbf{7.89e-2} \\
        & PINN & 9.40e-2 & 9.14e-2 & 3.61e-1 & 2.22e-2 & 1.70e-2 & 3.61e-1 \\
        \bottomrule
    \end{tabular}
\end{table}

Fig.~\ref{rate-estimation} (a) illustrates the convergence of the $L_2$ loss over training iterations under varying hyperparameter configurations. The standard CGW method ($\omega_2 = 0$) exhibits significant sensitivity to the choice of $\omega_1$; specifically, setting $\omega_1 = 0.2$ causes the model to stagnate at a higher loss plateau, whereas increasing it to $\omega_1 = 2$ allows for convergence to a lower loss. In contrast, our method as a generalized CGW  ($\omega_2 = 0.01$) demonstrates highly consistent convergence behavior regardless of the $\omega_1$ value. Both tested configurations of the generalized method ($\omega_1 = 0.2$ and $\omega_1 = 2$) rapidly descend to the same optimal, low-loss regime, matching the best-case performance of the standard CGW method. This result confirms that our method possesses far better robustness in parameter selection, effectively insulating the training process from hyperparameter sensitivity and ensuring stable, accurate convergence.

A notable observation arises when explicitly computing the spectral radius of the iterative operator $(\KKK - \gamma \SSS)^{-1}(\KKK+\AAA)$ using standard finite difference (FDM) discretization. For the four scenarios presented in Fig.~\ref{rate-estimation}, the spectral radii are calculated to be 386.85, 82.58, 7.48, and 5.72, respectively. As these values are significantly greater than one, empirical FDM implementations of our method fail to converge. This indicates that our generalized CGW formulation is only effective within the continuous neural network framework. We attribute this localized convergence to the implicit regularization properties inherent to the neural network optimization process, which effectively dampens the instabilities of the discrete operator.

Given the learned committor $q(\xx, \vv)$, we estimate the transition rate via
\begin{equation}\label{rate_u}
\kappa_{AB} := \beta^{-1}\gamma \int \|\nabla_\vv q(\xx,\vv)\|^2 \, (\tilde{\mu} \otimes \pi)(\dd\xx\, \dd\vv)
\end{equation}
and compare this with the over-damped transition rate $\kappa_0 := I_0[q_0]$ defined in~\eqref{variational_form_overdamped_langevin}. The asymptotic relation~\eqref{asymptotic} predicts
\begin{equation}\label{rate_asmp}
\kappa_{AB} = \gamma^{-1} \kappa_0 + O(\gamma^{-2})
\end{equation}
Fig.~\ref{rate-estimation}(b) reports transition-rate estimates from both the under-damped and over-damped committors. For sufficiently large $\gamma$, the two estimates agree closely, validating the asymptotic prediction. However, they diverge substantially in the weakly damped regime ($\gamma = 0.2$), underscoring the necessity of using the under-damped committor for accurate rate estimation when inertial effects are significant.

\begin{figure}[H]
    \centering
    \includegraphics[width=0.7\linewidth]{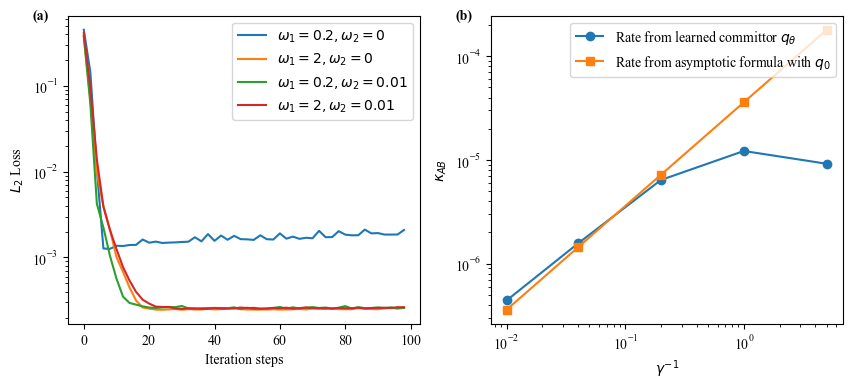}
    \caption{Panel (a): Convergence of the $L_2$ loss during training with varying hyperparameters. Panel (b): Comparison of the transition rate ($\kappa_{AB}$) estimated from the learned committor ($q_\theta$) versus the asymptotic formula ($q_0$) as a function of inverse friction ($\gamma^{-1}$). }
    \label{rate-estimation}
\end{figure}

    \subsection{Müller-Brown potential}
    We next consider the Müller-Brown potential, a prototypical two-dimensional benchmark for testing rare event sampling and transition pathway methods~\cite{khoo2019solving,Kang2024}. The potential is defined as
\begin{equation}
U(x_1,x_2) = \sum_{i=1}^4 A_i \exp\left(a_i(x_1-f_i)^2 + b_i(x_1-f_i)(x_2-g_i) + c_i(x_2-g_i)^2\right)
\end{equation}
where the parameters are specified as follows: $A = [-200, -100, -170, 15]$, $a =  [-1, -1, -6.5, 0.7]$, $b = [0, 0, 11, 0.6]$, $c =  [-10, -10, -6.5, 0.7]$, $f =  [1, 0, -0.5, -1]$, and $g =  [0, 0.5, 1.5, 1]$. 

We focus on transitions between two of the three metastable states, defined as circular regions around their respective minima:
\begin{align}
A &= \left\{(x_1, x_2) \in \mathbb{R}^2 : (x_1 - x_1^A)^2 + (x_2 - x_2^A)^2 \leq r^2\right\} \\
B &= \left\{(x_1, x_2) \in \mathbb{R}^2 : (x_1 - x_1^B)^2 + (x_2 - x_2^B)^2 \leq r^2\right\}
\end{align}
where the minima are located at $[x_1^A, x_2^A] = [-0.55, 1.44]$ and $[x_1^B, x_2^B] = [0.62, 0.028]$ with radius $r = 0.2$. We show the landscape of the potential in Fig.~\ref{muller-brown-potential} and mark the two metastable sets $A,B$ and two important saddle points $S_1,S_2$. The committor level set $\{q = 0.5\}$, which represents the separatrix between $A$ and $B$, is expected to lie predominantly near  $S_1$ due to its larger energy barrier compared to the other saddle point $S_2$. This geometric observation provides a useful heuristic check on the quality of our learned committor functions.

\begin{figure}[ht]
\centering
\includegraphics[width=0.4\linewidth]{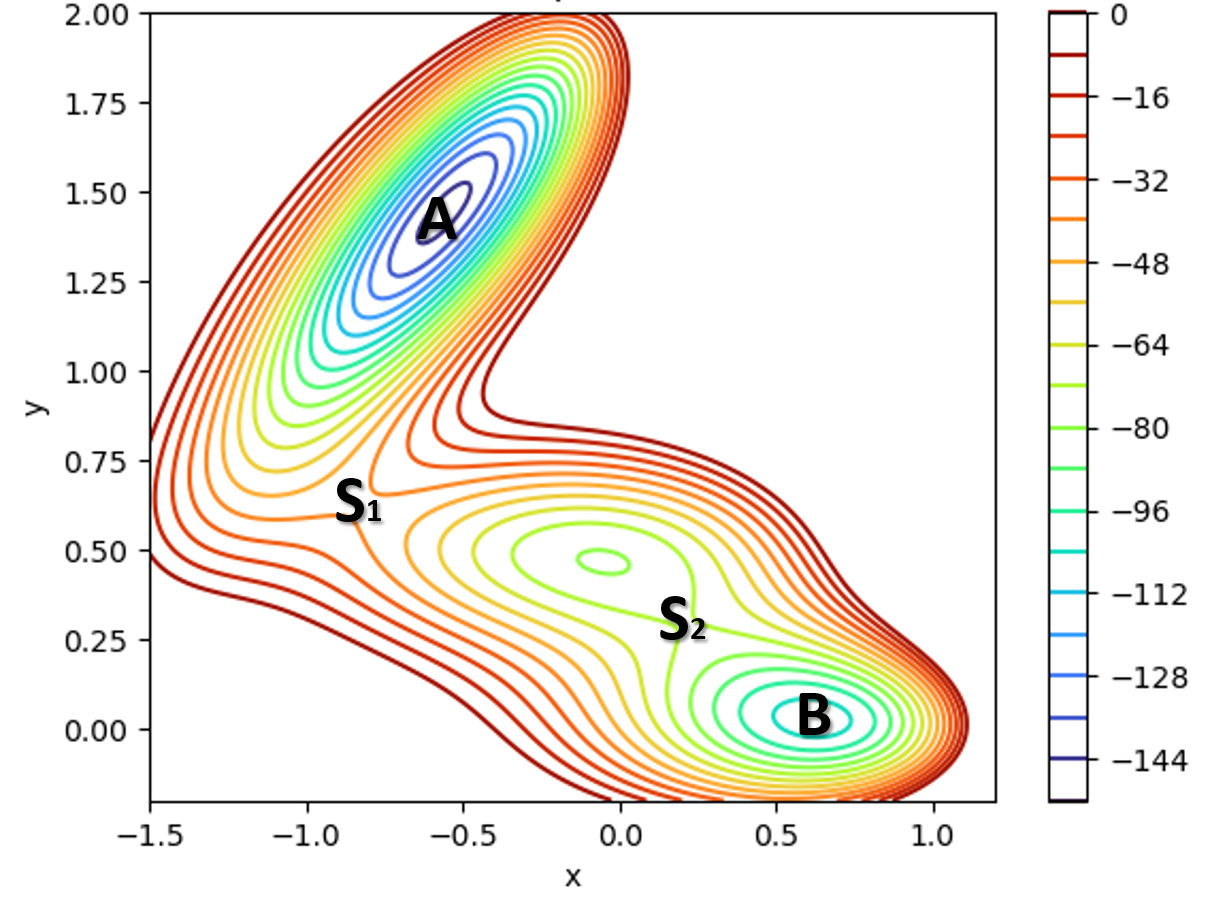}
\caption{Energy landscape of the Müller-Brown potential. The contour plot displays the two target metastable sets $A$ and $B$ (marked with circles) and two critical saddle points $S_1$ and $S_2$ that mediate transitions between basins. }
\label{muller-brown-potential}
\end{figure}

Unlike the over-damped case, the present system has a two‑dimensional configuration space while the committor function depends on four variables, which makes direct numerical solution by classical methods such as finite differences or finite elements prohibitively difficult. To validate our approach  we compute the committor on a uniform grid in configuration space combined with 25 velocity slices, and use these simulations to evaluate the method’s relative error. This sampling strategy allows us to assess performance across both position and velocity dimensions and to obtain robust error statistics.

\begin{figure}[H]
    \includegraphics[width=0.9\linewidth]{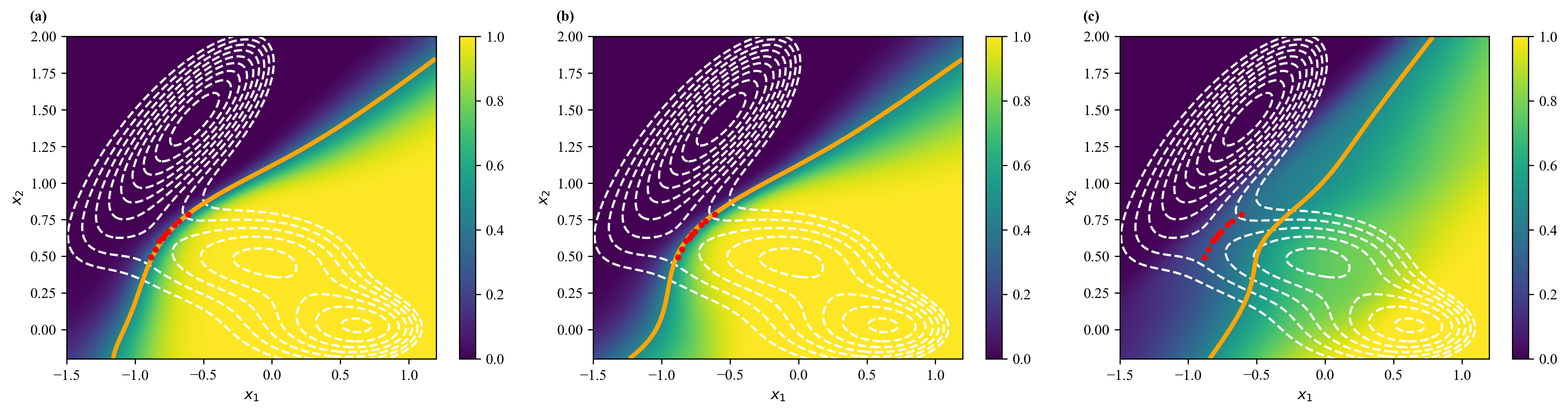}
    \caption{\label{Müller-Brown_results}
    The velocity-averaged committor function at $\kbt=5$ and $\gamma=5$ solved via (a) FDM (reference), (b) our iterative neural solver, and (c) the PINN approach. The solid orange curve indicates the level set $\{q_\theta = 0.5\}$, in comparison to a few red reference points where ($q_{ture} = 0.5$) generated from 1000 unbiased trajectories.}
\end{figure}

    Fig.~\ref{Müller-Brown_results} compares the velocity-averaged committors, $\bar{q}(x)=\langle q(x,\cdot)\rangle_{\pi}$, computed using (a) the FDM reference, (b) our iterative neural solver, and (c) the PINN approach at $\kbt=5$ and $\gamma=5$. As illustrated in panels (a) and (b), the $q_\theta = 0.5$ level set (solid orange line) generated by both the FDM reference and our iterative neural solver passes precisely through the red reference points obtained from unbiased simulations. This precise alignment demonstrates that our method accurately resolves the critical transition boundary and successfully captures the system's true underlying dynamics. Conversely, panel (c) reveals a significant spatial deviation between the PINN-generated level set and the reference points; the PINN solution fails to capture the sharp, localized structure of this boundary. This discrepancy highlights that the PINN method struggles to learn an accurate committor function, specifically failing to resolve the crucial $q \approx 0.5$ region. Ultimately, these results verify that our iterative neural solver yields a higher-fidelity solution, effectively overcoming the training limitations inherent in standard PINN architectures.

\begin{figure}
    \includegraphics[width=0.9\linewidth]{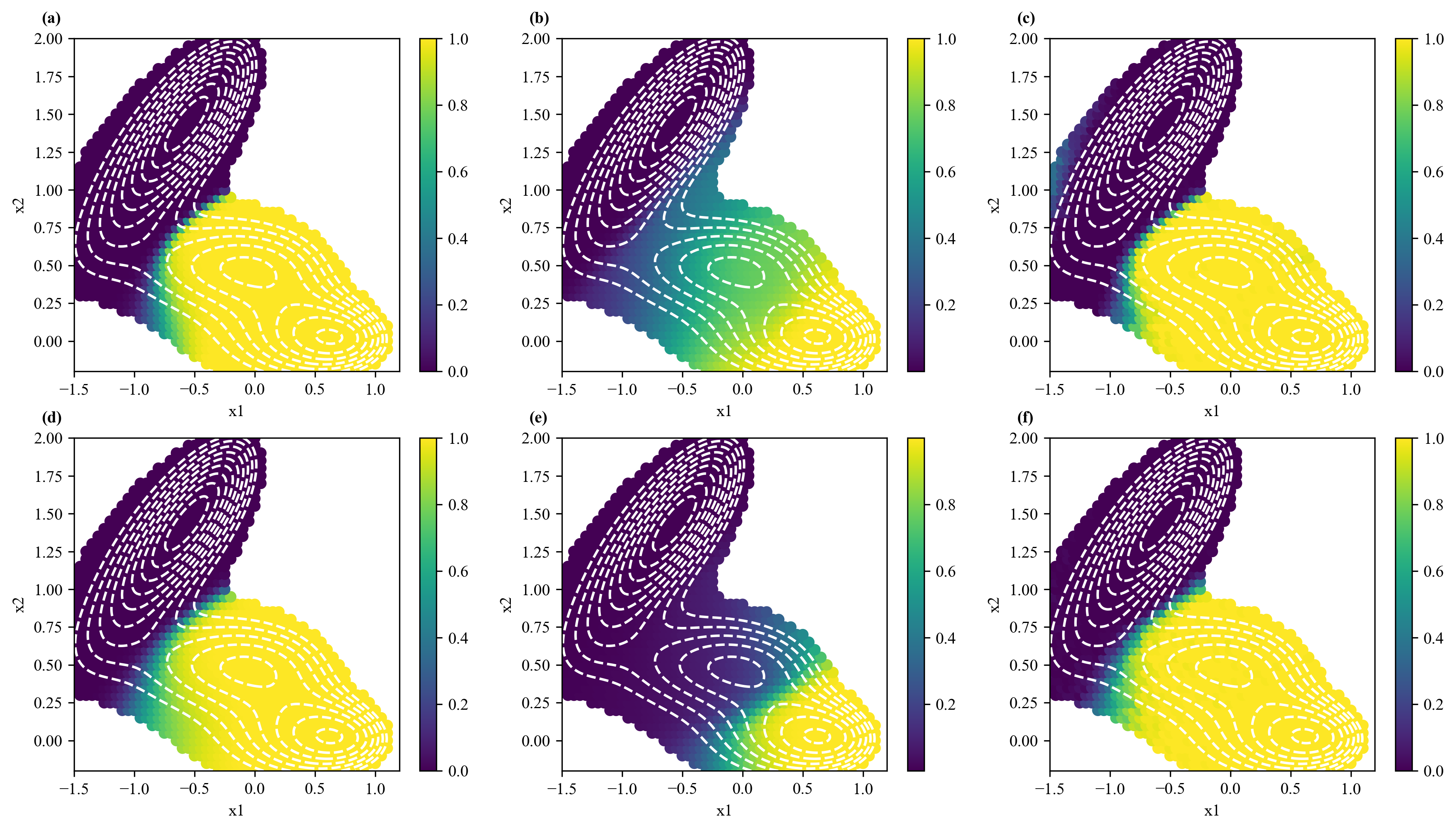}
    \caption{\label{Müller-Brown_gammas}
    Committor function $q_\theta(x_1, x_2)$ at a fixed velocity $(v_1,v_2)$. Results are shown for two friction coefficients: $\gamma = 5$ (top row: a–c) and $\gamma = 25$ (bottom row: d–f). The approximations produced by our iterative neural solver (a, d) and the standard PINN method (b, e) are compared against a reference solution estimated from 1000 unbiased simulation trajectories (c, f). our iterative neural solver trains for 20000 epochs and PINN method trains for 40000 epochs.}
\end{figure}

Fig.~\ref{Müller-Brown_gammas} provides a comprehensive comparison of the committor function $q_\theta(x_1, x_2)$ across the different methods and friction regimes with a fixed $(v_1,v_2)$. As is visually evident for both $\gamma = 5$ and $\gamma = 25$, the solution by our iterative neural solver is nearly identical to the reference true solution obtained from the unbiased simulations. This high fidelity confirms that our approach successfully resolves the correct transition interface across the energy landscape. In contrast, the PINN  solution fails to capture the true transition region; instead, either learns a $q=0.5$ level set at an entirely incorrect location near the secondary saddle point $S_2$ (see Panel (e)), or learns a moderate over saddle points $S_1$ and $S_2$ (see Panel (b)).  From a spectral perspective, a committor-like function whose $q=0.5$ isosurface crosses an incorrect saddle point actually corresponds to an eigenfunction of the governing operator with an eigenvalue very close to zero. This highlights a critical shortcoming of PINNs in complex energy landscapes: because they rely on minimizing a residual, they cannot easily distinguish between the true solution and these near-zero eigenfunctions. Consequently, any linear combination of these eigenfunctions yields a deceptively low PINN loss (see Panels (b) and (e)).  Conversely, the iterative neural solver remains highly robust with varying $\gamma$ and reliably identifies the physically correct transition mechanisms.

Furthermore, a direct comparison between the top ($\gamma = 5$) and bottom ($\gamma = 25$) rows of Fig.~\ref{Müller-Brown_gammas} reveals the distinct effect of the friction coefficient on the spatial structure of the committor function. At $\gamma = 5$, the reactive interface—the intermediate region where the committor probability rapidly shifts from 0 to 1—is relatively sharp. In contrast, increasing the friction to $\gamma = 25$ produces a noticeably broader transition region, reflecting a more diffuse spatial variation across the energy landscape. Crucially, our iterative neural solver accurately captures this friction-dependent broadening, matching the unbiased simulation reference.

\begin{table*}
        \caption{Results for the Müller-Brown potential. Here $Err = q_\theta - q_{ref}$ is the difference between the numerical result and the reference solution. $L_2$ and $L_1$ are calculated under the invariant measure $\mu\otimes \pi$ and $L_2(u)$ is measured under $\mathcal{U}\otimes\pi$ where $\mathcal{U}$ is a uniform distribution over the spatial space. We show that the standard PINN loss $\Vert \LLL q_\theta \Vert_{L_2(u)}$ is not suitable to evaluate the accuracy of the learned committor function.  }
        \centering
        
    \begin{tabular}{@{} l |l| l l l| l l l| l @{}}
\toprule
  $\gamma$ & Method & $\frac{\Vert Err \Vert_{L_2}}{\Vert q_{ref}\Vert_{L_2}}$ & $\frac{\Vert Err \Vert_{L_1}}{\Vert q_{ref}\Vert_{L_1}}$ & $\Vert E rr\Vert_{L_{\infty}}$ & $\Vert \LLL q_\theta \Vert_{L_2}$ & $\Vert \LLL q_\theta \Vert_{L_1}$ & $\Vert \LLL q_\theta \Vert_{L_{\infty}}$& $\Vert \LLL q_\theta \Vert_{L_2(u)}$ \\
\midrule
 1& Ours & \textbf{6.82e-04} & \textbf{1.49e-06} & 1.00e+00 & \textbf{1.20e-03} & \textbf{3.08e-08} & 2.26e+02&5.65e+00 \\
 & PINNs& 1.98e-02 & 3.54e-02 & 1.00e+00 & 2.68e-03 & 6.16e-04 & \textbf{5.61e+00}&2.85e-01 \\
 \midrule
5 & Ours & \textbf{1.16e-03} & \textbf{8.64e-06} & 1.00e+00 & \textbf{1.94e-04} & \textbf{9.66e-09} & 3.66e+01&1.27e+00 \\
& PINNs& 4.72e-02 & 1.19e+00 & \textbf{9.40e-01} & 2.31e-02 & 1.57e-02 & \textbf{2.51e+00}&2.80e-01 \\
\midrule
25 & Ours & \textbf{7.30e-04} & \textbf{8.50e-05} & \textbf{7.15e-01} & \textbf{1.05e-04} & \textbf{1.36e-06} & 5.25e+00&4.48e-01 \\
& PINNs& 8.04e-02 & 3.84e-01 & 9.68e-01 & 4.55e-03 & 3.40e-03 & \textbf{2.13e+00}&1.86e-01 \\
 
\bottomrule
\bottomrule
\end{tabular}
    \label{Müller-Brown loss}
    \end{table*}
    
\subsection{Alanine dipeptide}
As a final test we apply our iterative method to the canonical molecular benchmark of alanine dipeptide in vacuum, investigating the rare transition between the \(C_{7\mathrm{eq}}\) and \(C_{7\mathrm{ax}}\) conformers at \(T=300\) K~\cite{Bonati2020,khoo2019solving}. The full atomic system contains 22 atoms, so the phase-space dimension for the committor is nominally \(22\times 3 \times 2 = 132\). Direct approximation in this high-dimensional space is impractical, however, because the reactive dynamics are known to be dominated by a few collective degrees of freedom\cite{Bonati2020}. Guided by prior studies and by the arguments in Section~\ref{function_approximation}, we therefore perform a model reduction using internal dihedral coordinates as collective variables.
\begin{figure}[H]
        \centering
        \includegraphics[width=\linewidth]{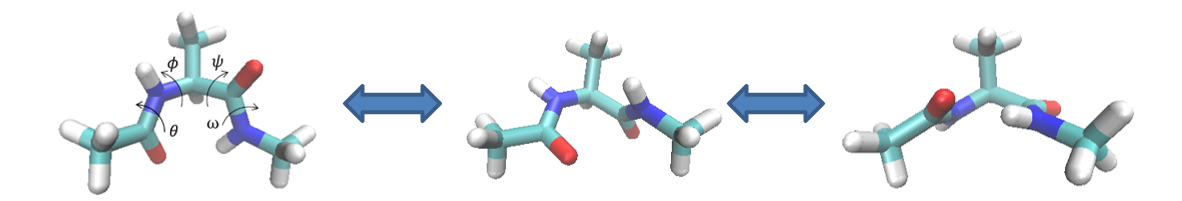}
        \caption{\label{ala2} Snapshots of alanine dipeptide conformations in metastable and transition states, shown from left to right as state A, the transition region, and state B. The relevant torsional angles of the peptide are depicted for state A.}
        
    \end{figure}

Specifically, we retain four dihedral angles \(\phi,\psi,\theta,\omega\) that quantify the relative orientations of the two peptide units with respect to the more rigid tetrahedron formed by the \(N\), \(C\), \(C_{\alpha}\), and \(C_{\beta}\) atoms (see Fig.~\ref{ala2}). Empirical evidence in the literature—most notably the effectiveness of \(\phi,\psi\) in enhanced-sampling schemes and transition-path studies—motivates this choice. Moreover, transition path sampling  studies have revealed that configurations in the reactive region exhibit an approximate linear anti-correlation between \(\phi\) and \(\theta\), i.e., \(\theta\approx -\phi\)\cite{Peter2000}. Building on the reduction strategy advocated in \cite{Kang2024}, we therefore adopt the ansatz that the committor predominantly depends on the reduced set of coordinates and the corresponding projected velocities:
\begin{equation}
    q(\mathbf{x},\mathbf{v}) \approx q(\phi,\theta,v_{\phi},v_{\theta})
\end{equation}
where \(v_{\phi}=\mathbf{v}\cdot\nabla_{\mathbf{x}}\phi\) and \(v_{\theta}=\mathbf{v}\cdot\nabla_{\mathbf{x}}\theta\). The justification for this manifold hypothesis and its implications for approximation error are discussed in Section~\ref{function_approximation}.

Figure~\ref{ala2_results} presents the learned committor for friction coefficients \(\gamma=5\) and \(25\ \mathrm{ps}^{-1}\). Panels (a) and (e) show the velocity-averaged committor \(\langle q\rangle_{v}\), which is in good qualitative agreement with prior results obtained under over-damped Langevin dynamics: the two metastable basins and the intervening reactive channel appear in the expected regions of the \(\phi\)–\(\psi\) plane. The near-linear anti-correlation between \(\phi\) and \(\theta\) is evident in these projections~\cite{Peter2000}, consistent with the commonly observed geometry of the transition tube. The remaining panels display committor slices at fixed values of the projected velocities. Unlike the Müller–Brown example, here we observe only modest variation of the committor with respect to velocity slices. We attribute this attenuation of velocity dependence primarily to the model reduction: by compressing the original 132-dimensional phase space onto the four-dimensional manifold \((\phi,\theta,v_{\phi},v_{\theta})\), much of the explicit influence of high-dimensional velocity components is averaged out or implicitly encoded in the reduced coordinates.

To validate the predictive quality of the learned committor, we perform a targeted shooting experiment. We sample 300 points satisfying \(|q-0.5|\leq 0.005\) and for each point launch 100 independent short trajectories to empirically estimate the committor probability. The distribution of these empirical committor estimates should be approximately Gaussian with mean 0.5 if the selected points lie in the true transition region and if sampling noise is predominantly statistical~\cite{lilinren2019}. Figure~\ref{ala2_results_1}(a,b,e,f) marks the selected transition-region points on the \(\phi\)–\(\psi\) and \(\phi\)–\(\theta\) projections; panels (c,d,g,h) report the histograms and Q–Q plots of the shooting results. The empirical distributions are close to Gaussian and centered near 0.5, demonstrating that our learned committor successfully identifies the transition region and produces calibrated probability estimates.

Taken together, these results indicate that the reduced-coordinate ansatz captures the essential reactive features of alanine dipeptide: the velocity-averaged committor agrees with established over-damped results, the expected near-linear anti-correlation between $\phi$ and $\theta$ are recovered, and direct shooting tests confirm the committor’s predictive accuracy. The modest velocity dependence observed after reduction highlights a practical trade-off: model reduction substantially simplifies the learning task and yields accurate transition-region identification, but it can also suppress explicit velocity effects that may be relevant in more weakly damped or strongly inertial regimes.

     \begin{figure}
    \includegraphics[width=\linewidth]{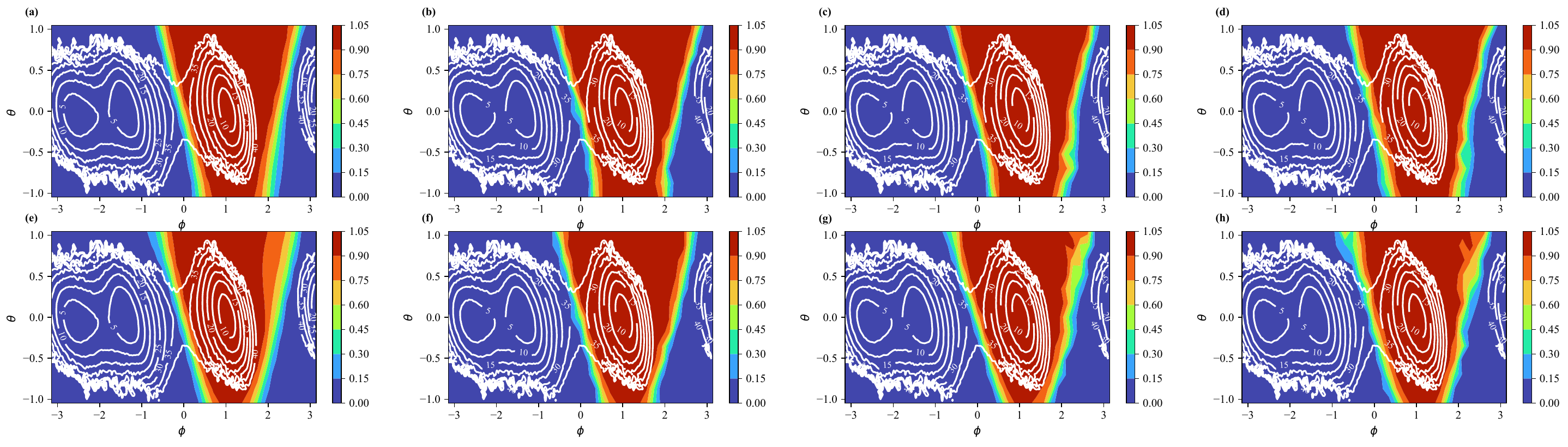}
    \caption{\label{ala2_results}Results for alanine dipeptide in vacuum. The white dashed line represents the free energy of the system and the contour plot shows the value of the committor function. Panels (a–d) correspond to $\gamma = 5$, while panels (e–h) correspond to $\gamma = 25$.  (a,e): the expectation of the learned committor function over velocity space $\langle q \rangle_v$; (b-d,e-h):committor function along $\phi$, $\theta$ at different fixed velocities $v_\phi$, $v_\theta$. 
    }
\end{figure}

     \begin{figure}
    \includegraphics[width=\linewidth]{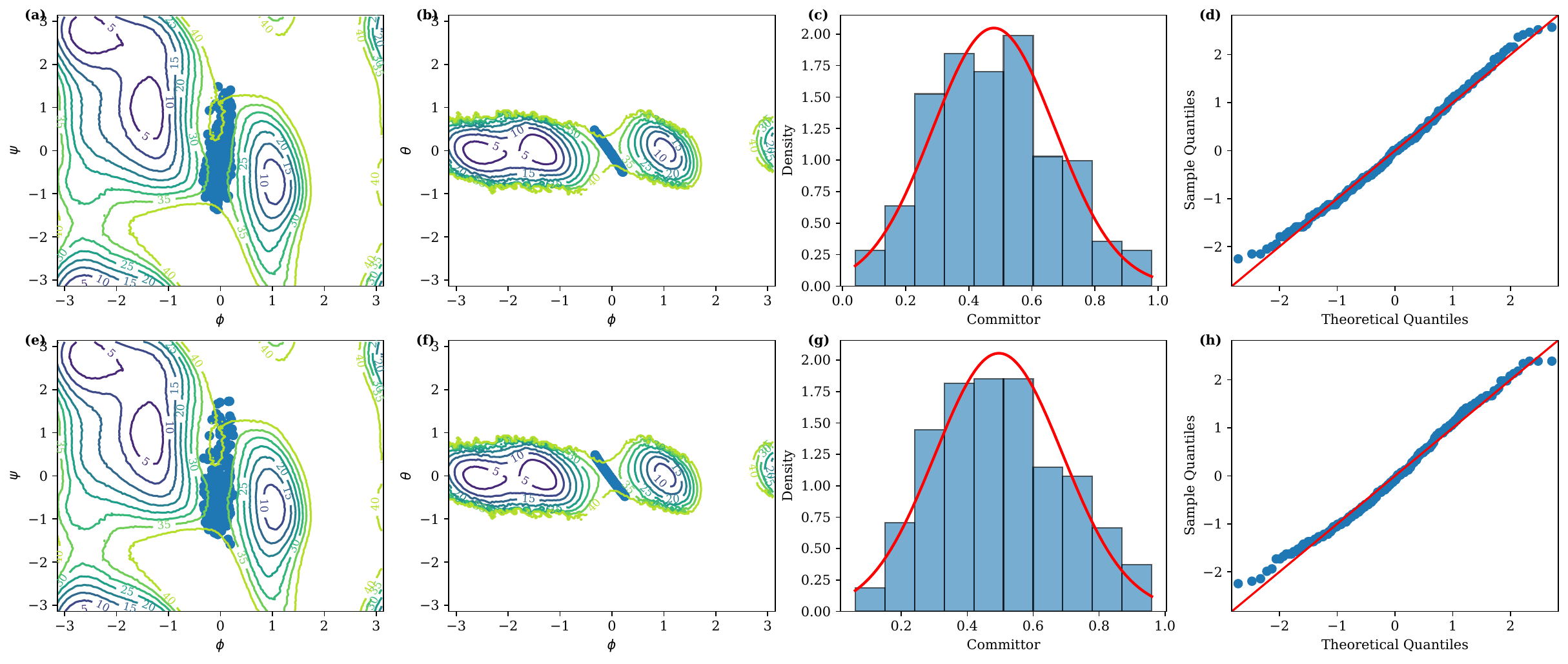}
    \caption{\label{ala2_results_1}
    Shooting results for alanine dipeptide in vacuum: panels (a–d) correspond to $\gamma = 5$, while panels (e–h) correspond to $\gamma = 25$.  (a,e): transition states in the $\phi$-$\psi$ plot; (b,f): transition states in the $\phi$-$\theta$ plot; (c,g): the histogram for the committor probability. For each point, 100 independent simulations are performed. The p-values of the Kolmogorov–Smirnov test for normal distribution are 0.6335 and 0.4038, correspondingly; (d,h): the Q-Q plots for the committor probability.}
\end{figure}

\section{Conclusion}
    In this work, we introduced a generalized CGW splitting framework to successfully compute the committor function for under-damped Langevin dynamics. We demonstrated that standard PINNs fundamentally struggle in complex energy landscapes, often falling victim to spectral ambiguities that cause severe topological failures near the transition interface. In contrast, our iterative neural variational solver bypasses these issues, reliably resolving the sharp $q \approx 0.5$ isosurface and delivering a highly accurate approximation of the true committor probability. A core contribution of this framework is the deployment of a targeted auxiliary potential, $\widetilde{U}$, which strategically shifts the sampling measure away from the metastable basins and directly into the reactive transition region. This enables precise identification of transition states and accurate estimation of reaction rates across various friction regimes. 
    
    Despite these computational advances, two primary challenges remain for future investigation. First, the current design of the auxiliary potential $\widetilde{U}$ relies on the heuristic principle that sampling must be aggressively focused on the transition zone. Developing a fully adaptive, mathematically rigorous strategy for constructing this auxiliary measure will further maximize the algorithm's efficiency. Second, the spatial operator $\KKK$ effectively serves as a preconditioner for the system's infinitesimal generator, $\LLL$. While we have empirically demonstrated its ability to stabilize the optimization, establishing an automated, theoretically grounded protocol for selecting the optimal hyperparameters ($\omega_1$ and $\omega_2$) to strictly maximize the convergence rate remains an open problem in operator analysis. We leave these theoretical refinements for future study.

\section*{ACKNOWLEDGMENTS}
TL and ZL acknowledge the support from National Key R\&D Program of China under grant 2021YFA1003301, and National Science Foundation of China under grant 12288101. XZ acknowledges the support of General Research Funds from the Research Grants Council of the Hong Kong Special SAR, China (Project No. 11304525).

\section*{CRediT authorship contribution statement}
Tiejun Li: Conceptualization, Methodology, Writing – review and editing.

Zhaoqi Liu: Methodology, Validation, Formal analysis, Writing – original draft.

Xiang Zhou: Conceptualization, Methodology, Writing – review and editing.

\appendix

\section{Detailed settings and additional results for numerical experiments\label{details}}
\subsection{Double well potential}
We evaluate our method on this potential by setting $\kbt = 0.1$ and examining the friction coefficients $\gamma \in \{0.2, 1, 5, 25\}$. For the $k$-th iterative step, we employ a neural network $q^{k}_\theta(\bm{x}, \bm{v})$ featuring an architecture of [6, 20, 20, 1] nodes. Optimization is performed using the Adam optimizer with a learning rate of $l^{k}$. We train each subproblem using a dataset of size $10^6$, sampled uniformly in the range $[-1, 1]$ and assigned a weight $w_k \propto \beta^k \Vert \nabla_{\bm{x}} q^k_{\theta} \Vert^{\alpha^k}$. This weighting scheme concentrates the optimization process directly on the transition region. We define the hyperparameters for the spatial operator $\KKK$ as $\omega^k_1$ and $\omega^k_2$, and let $N_k$ denote the number of training epochs for the $k$-th subproblem. These hyperparameter settings are summarized in Table~\ref{1d_hyper}. For comparison, we train a baseline PINN using the identical network architecture for 1000 epochs with a learning rate of $10^{-3}$, followed by 8000 epochs at $10^{-4}$. We employ an $L^2$ residual loss function defined as $\frac{1}{N}\sum_{i=1}^{N} \Vert \LLL q_{\theta}\Vert^2$. The training points are generated by sampling $x_1$ uniformly from the interval $[-1,1]$ and drawing $v_1$ from the normal distribution $\mathcal{N}(0,\kbt)$.The reference solution is computed via a finite difference method (FDM) with step sizes $\Delta x_1 = 0.05$ and $\Delta v = 0.01\sqrt{\kbt}$.
\begin{table}[H]
    \centering
    \caption{Hyperparameters for the Double well potential}
    \label{1d_hyper}
    \begin{tabular}{c|c|c|c|c|c|c}
    \toprule
        Num of steps &$l^{k}$& $N_k$ & $\omega^k_1$ & $\omega_2^k$ & $\alpha^k$ & $\beta^k$  \\
    \midrule
        50& $10^{-3}$ & 20 & 0.2 & 0.01 & 0.9 & 0.3  \\
        200 & $10^{-4}$ & 40 & 0.2 & 0.01 & 0.9 & 0.3  \\
    \bottomrule
    \end{tabular}
\end{table}

Figures~\ref{1d_committor_slice} and \ref{1d_committor_pinn_slice} present the numerical results for both our method and the PINN across the evaluated $\gamma$ regimes. For each case, we plot five planar cross-sections as functions of $x_1$ and $v_1$ by holding the remaining variables fixed. As $\gamma$ increases, the $q \approx 0.5$ isosurface becomes noticeably less tilted, indicating that velocity exerts less influence on the committor function—an observation that is highly consistent with predictions from asymptotic analysis. Furthermore, when compared to the PINN baseline, our method yields a significantly thinner, sharper transition interface that closely matches the FDM reference solution. This confirms that our approach effectively resolves the critical dynamical information within the transition region.
\begin{figure}[H]
\centering
    \includegraphics[width=0.6\linewidth]{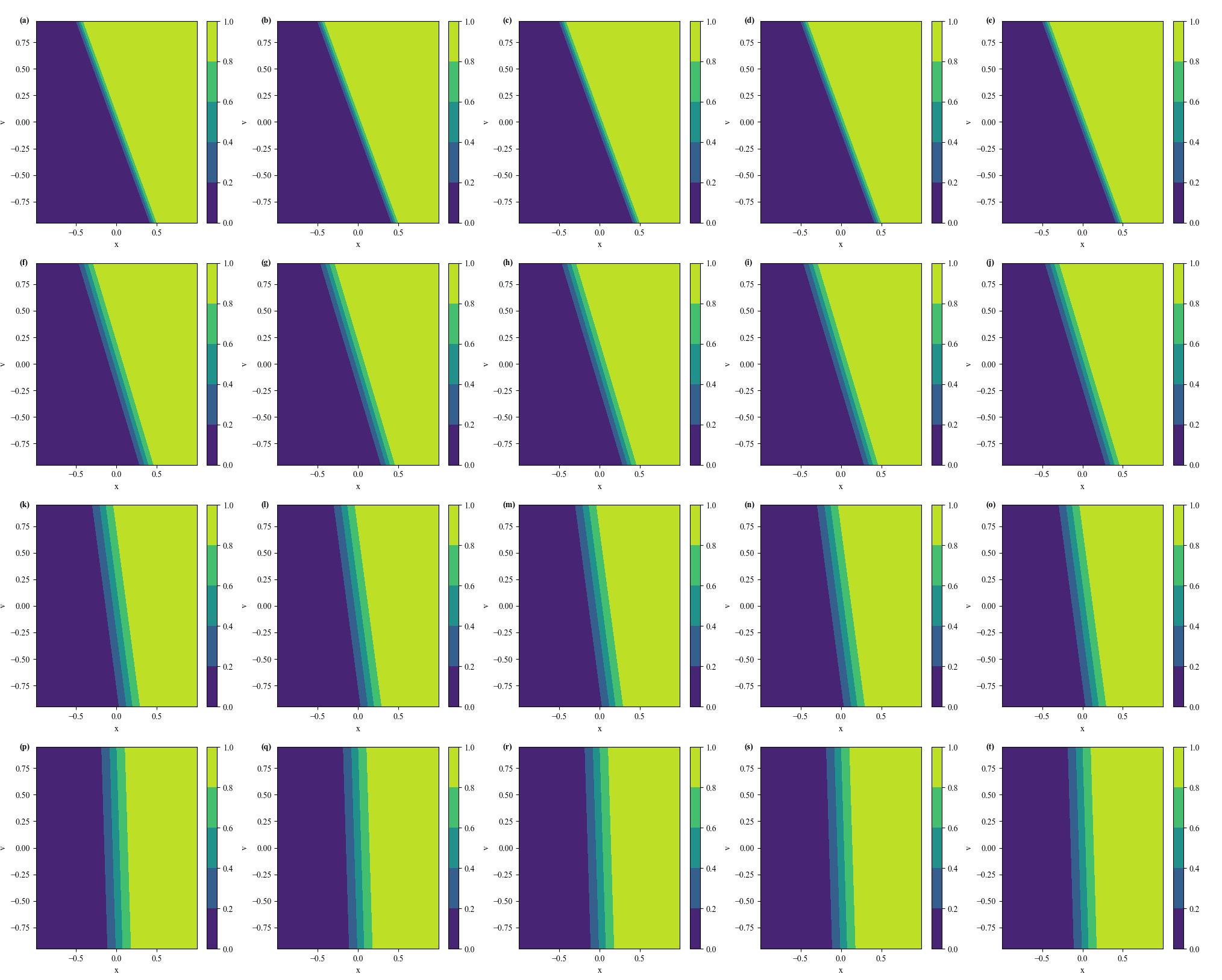}
    \caption{\label{1d_committor_slice}Committor function learned from our method. Each plane is a function of $x_1,v_1$ with fixed remaining variables. (a-e): $\gamma = 0.2$; (f-j): $\gamma = 1$; (k-n): $\gamma = 5$; (p-t): $\gamma = 25$}
\end{figure}

\begin{figure}[H]
\centering
    \includegraphics[width=0.6\linewidth]{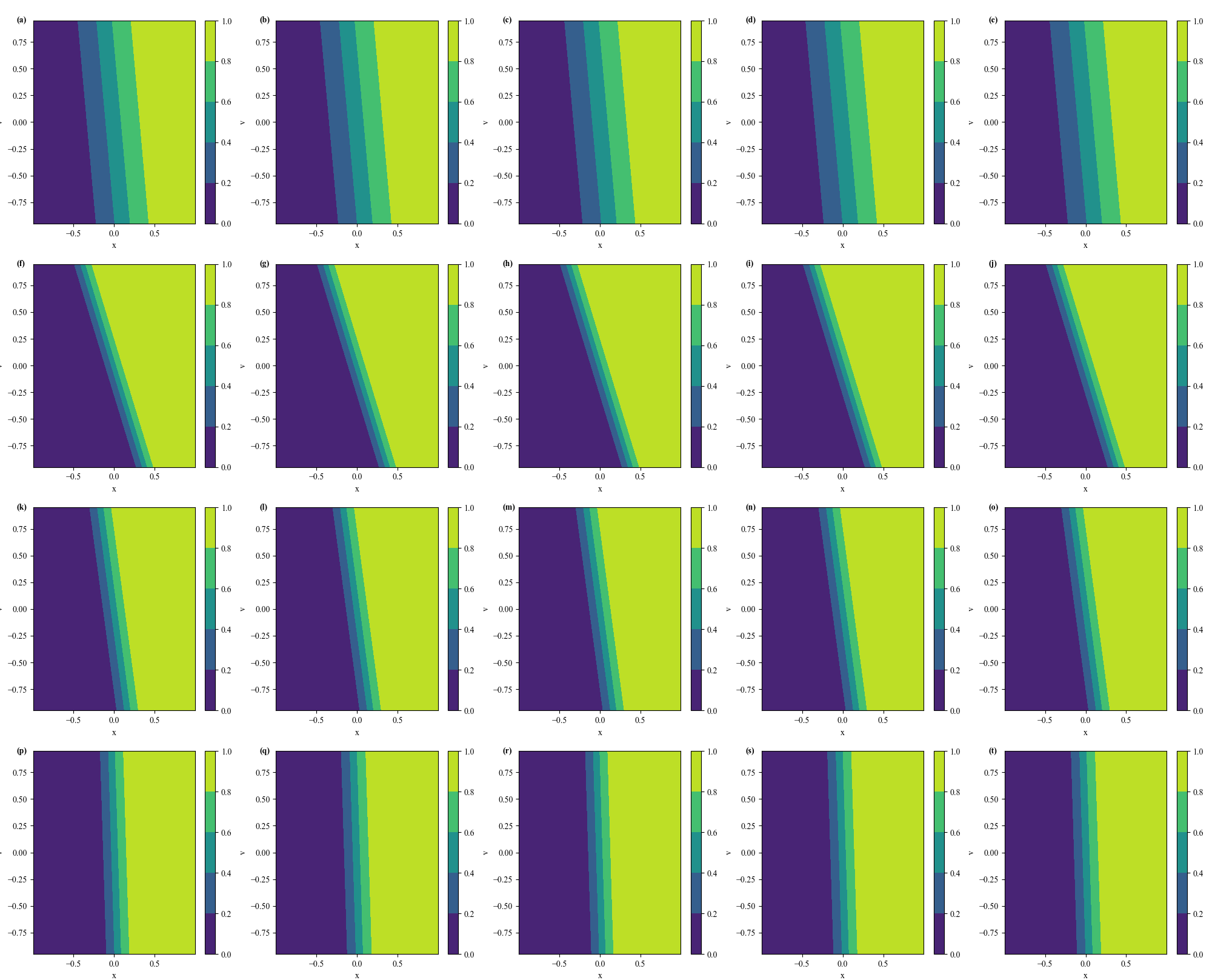}
    \caption{\label{1d_committor_pinn_slice}Committor function learned from PINNs. Each plane is a function of $x_1,v_1$ with fixed remaining variables. (a-e): $\gamma = 0.2$; (f-j): $\gamma = 1$; (k-n): $\gamma = 5$; (p-t): $\gamma = 25$}
\end{figure}

\subsection{Müller-Brown potential}

We evaluate our method on this potential by setting $\kbt = 5$ and examining friction coefficients $\gamma \in \{1, 5, 25\}$. For the $k$-th iterative step, we parameterize the committor function using a neural network, $q^{k}_\theta(\bm x, \bm v)$, featuring a $[4, 8, 64, 64, 64, 64, 8, 1]$ node architecture. Optimization is performed using the Adam optimizer with a step-dependent learning rate, $l^{k}$. Each subproblem is trained on a dataset of size $N_d$. The data is initially sampled uniformly over the spatial domain $[-1.5, 1.2] \times [-0.2, 2]$. To focus the learning capacity, each sample is assigned a composite weight $w = \beta^{k}_{\text{top-k}} w^{k}_{\text{top-k}} + \beta^{k}_{\text{grad}}w_{\text{grad}} + (1 - \beta^{k}_{\text{top-k}} - \beta^{k}_{\text{grad}}w_{\text{grad}}) \cdot \frac{1}{N_d}$. Here, $w^{k}_{\text{top-k}}$ is determined via the top-$k$ algorithm (an adaptive sampling strategy designed for PINNs), and $w_{\text{grad}} \propto \Vert \nabla_{\xx} q \Vert^2$. Furthermore, because the training explicitly evaluates the potential gradient $\nabla U$, we exclude points with potential energy greater than zero from numerical consideration to maintain stability. The hyperparameters defining the spatial operator $\KKK$ are denoted as $\omega^k_1$ and $\omega^k_2$, and the number of training epochs for the $k$-th subproblem is denoted as $N_k$. These configurations are summarized in Table~\ref{muller_hyper}. For comparison, we train a baseline PINN model for 100,000 epochs with a learning rate of $10^{-4}$, followed by 100,000 epochs at $5 \times 10^{-5}$, and finally 200,000 epochs at $10^{-5}$. We employ an $L^2$ residual loss function defined as $\frac{1}{N}\sum_{i=1}^{N} \Vert \LLL q_{\theta}\Vert^2$. The training points are generated by sampling $(x_1,x_2)$  uniformly in the region $\{(x_1,x_2):U(x_1,x_2)<100)\}$ and drawing $(v_1,v_2)$ from the normal distribution $\mathcal{N}(0,\kbt)$.The reference ground-truth solution is generated via 100 independent brute-force simulations using the BAOAB algorithm~\cite{leimkuhler2013rational,BAOAB2013} with a time step of $2 \times 10^{-5}$.
\begin{figure}[H]
    \includegraphics[width = \linewidth]{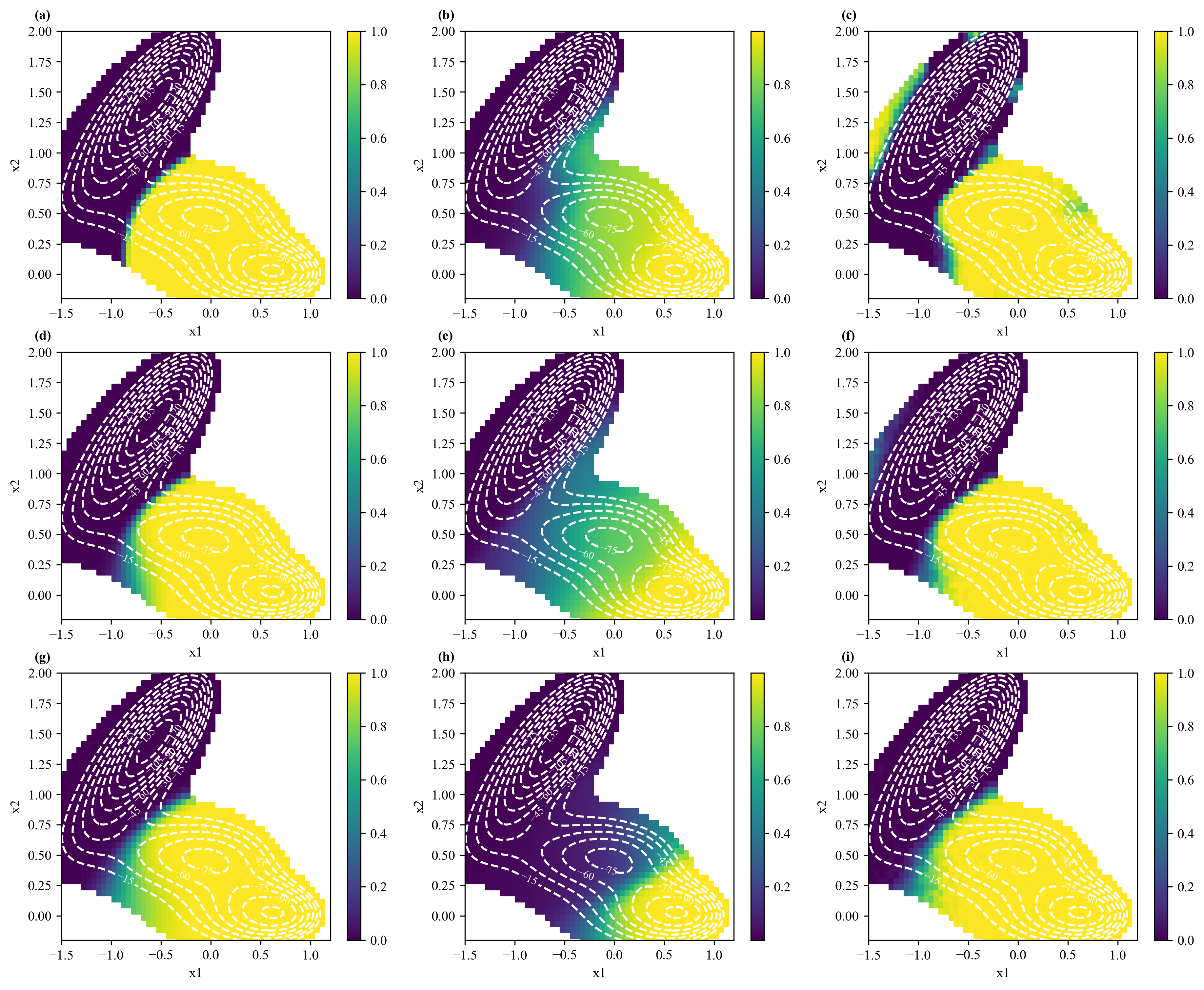}
    \caption{\label{muller_committor_gammas} Comparison of the results for the Müller-Brown potential from different methods: plane (a,d,g): Ours; plane (b,e,h): PINNs; plane (c,f,i) brute force simulation. Different rows correspond to different $\gamma$ values: plane (a,b,c): $\gamma = 1$; plane (d,e,f): $\gamma=5$; plane (g,h,i): $\gamma = 25$. The white dashed line represents the potential energy. }
\end{figure}

In Fig.\ref{muller_committor_gammas}, we present the numerical solution from our method and PINNs at $\kbt = 5$ and $\gamma = 1,5,25$, associated with the reference solution from brute force simulation (committor analysis). For a specific velocity, as $\gamma$ grows larger, the transition interface $q\approx 0.5$ becomes wider. This agrees with physical intuition: a larger friction coefficient speeds up the convergence of the velocity to its stationary distribution and reduces the influence of the initial velocity on the committor function.

\begin{table}[H]
    \centering
    \caption{Hyperparameters for the Müller-Brown potential}
    \label{muller_hyper}
    \begin{tabular}{c|c|c|c|c|c|c}
    \toprule
        Num of steps &$lr^{k}$& $N_k$ & $\omega^k_1$ & $\omega_2^k$ & $\beta_{grad}^k$ & $\beta_{\text{top-k}}$  \\
    \midrule
        200& $1\times10^{-3}$ & 20 & 0.1 & 0.1 & 0.9 & 0.05  \\
        200 & $1\times 10^{-4}$ & 20 & 0.1 & 0.1 & 0.9 & 0.05  \\
        1000 & $5\times 10^{-5}$ & 60 & 2 & 2 & 0.9 & 0.05  \\
        1000 & $1\times 10^{-5}$ & 60 & 2 & 2 & 0.9 & 0.05  \\
    \bottomrule
    \end{tabular}
\end{table}

\subsection{Alanine dipeptide}
Molecular dynamics (MD) simulations of alanine dipeptide (Ace-Ala-Nme) in vacuo were executed using GROMACS-2025.3~\cite{Hess2008,10.1093/bioinformatics/btt055}, augmented with the PLUMED plugin~\cite{TRIBELLO2014604}. The system was parameterized using the Amber99-SB force field~\cite{amber99-SB} and integrated with a 2 fs time step. Phase-space exploration was driven by Langevin dynamics, maintaining a target temperature of 300 K via a damping coefficient $\gamma_i = \frac{m_i}{\tau_t}$, parameterized with a coupling time constant of $\tau_t = 0.05$ ps. To learn the full phase-space committor function $q(\bm{x}, \bm{v})$ governing the under-damped Langevin dynamics, we adopt a two-stage approach~\cite{Kang2024}. First, we pre-train a baseline committor function $q_0(\bm{x})$ derived from the corresponding over-damped Langevin dynamics. Subsequently, this $q_0$ informs a 12 ns biased MD simulation employing the potential $V = -\lambda \log(\Vert \nabla_{\xx} q_0 \Vert^2 + \epsilon)$. The scaling hyperparameter was set to $\lambda = 3.5$, which provided the most efficient sampling. The harvested trajectory data is then filtered to exclude configurations mapped to the defined metastable states $A$ and $B$, ensuring the final dataset strictly comprises transition-pathway configurations.
Molecular dynamics (MD) simulations of alanine dipeptide (Ace-Ala-Nme) in vacuo were executed using GROMACS-2025.3~\cite{Hess2008,10.1093/bioinformatics/btt055}, augmented with the PLUMED plugin~\cite{TRIBELLO2014604}. The system was parameterized using the Amber99-SB force field~\cite{amber99-SB} and integrated with a 2 fs time step. Phase-space exploration was driven by Langevin dynamics, maintaining a target temperature of 300 K via a damping coefficient $\gamma_i = \frac{m_i}{\tau_t}$, parameterized with a coupling time constant of $\tau_t = 0.05$ ps. To learn the full phase-space committor function $q(\bm{x}, \bm{v})$ governing the under-damped Langevin dynamics, we adopt a two-stage approach~\cite{Kang2024}. First, we pre-train a baseline committor function $q_0(\bm{x})$ derived from the corresponding over-damped Langevin dynamics. Subsequently, this $q_0$ informs a 12 ns biased MD simulation employing the potential $V = -\lambda \log(\Vert \nabla_{\xx} q_0 \Vert^2 + \epsilon)$. The scaling hyperparameter was set to $\lambda = 3.5$, which provided the most efficient sampling. The harvested trajectory data is then filtered to exclude configurations mapped to the defined metastable states $A$ and $B$, ensuring the final dataset strictly comprises transition-pathway configurations.

During the $k$-th iterative optimization step, the under-damped committor $q^{k}_\theta(\bm{x}, \bm{v})$ is represented by a deep neural network featuring a $[4, 8, 256, 256, 1]$ architecture. The network weights are updated using the Adam optimizer, governed by a learning rate $l^{k}$. The tuning parameters for the spatial operator $\KKK$, denoted as $\omega^k_1$ and $\omega^k_2$, alongside the corresponding epoch counts $N_k$ for each subproblem, are comprehensively tabulated in Table~\ref{ala2_hyper}.
\begin{table}[H]
    \centering
    \caption{Hyperparameters for Alanine dipeptide}
    \label{ala2_hyper}
    \begin{tabular}{c|c|c|c|c}
    \toprule
        Num of steps &$lr^{k}$& $N_k$ & $\omega^k_1$ & $\omega_2^k$   \\
    \midrule
        400& $1\times10^{-3}$ & 40 & 4 & 4   \\
        400 & $1\times 10^{-4}$ & 80 & 4 & 4 \\
        400 & $5\times 10^{-5}$ & 80 & 4 & 4   \\
        400 & $1\times 10^{-4}$ & 80 & 10 & 10  \\
        2000 & $5\times 10^{-5}$ & 80 & 10 & 10  \\
    \bottomrule
    \end{tabular}
\end{table}

The choice of sampling strategy significantly impacts the quality of the learned committor function. Fig.~\ref{ala2_gamma25_biased} and \ref{ala2_gamma25_metad} compare models trained on two different datasets—one generated via our proposed biased simulation and the other via metadynamics using $\phi$ and $\psi$ as collective variables (CVs)—while keeping all hyperparameters fixed. The results demonstrate that our biased simulation dataset enables the network to learn a sharply defined, smooth transition interface. In contrast, training on the metadynamics dataset yields noticeably inferior performance, failing to adequately resolve the sharp boundary between the metastable states.

\begin{figure}
    \includegraphics[width=\linewidth]{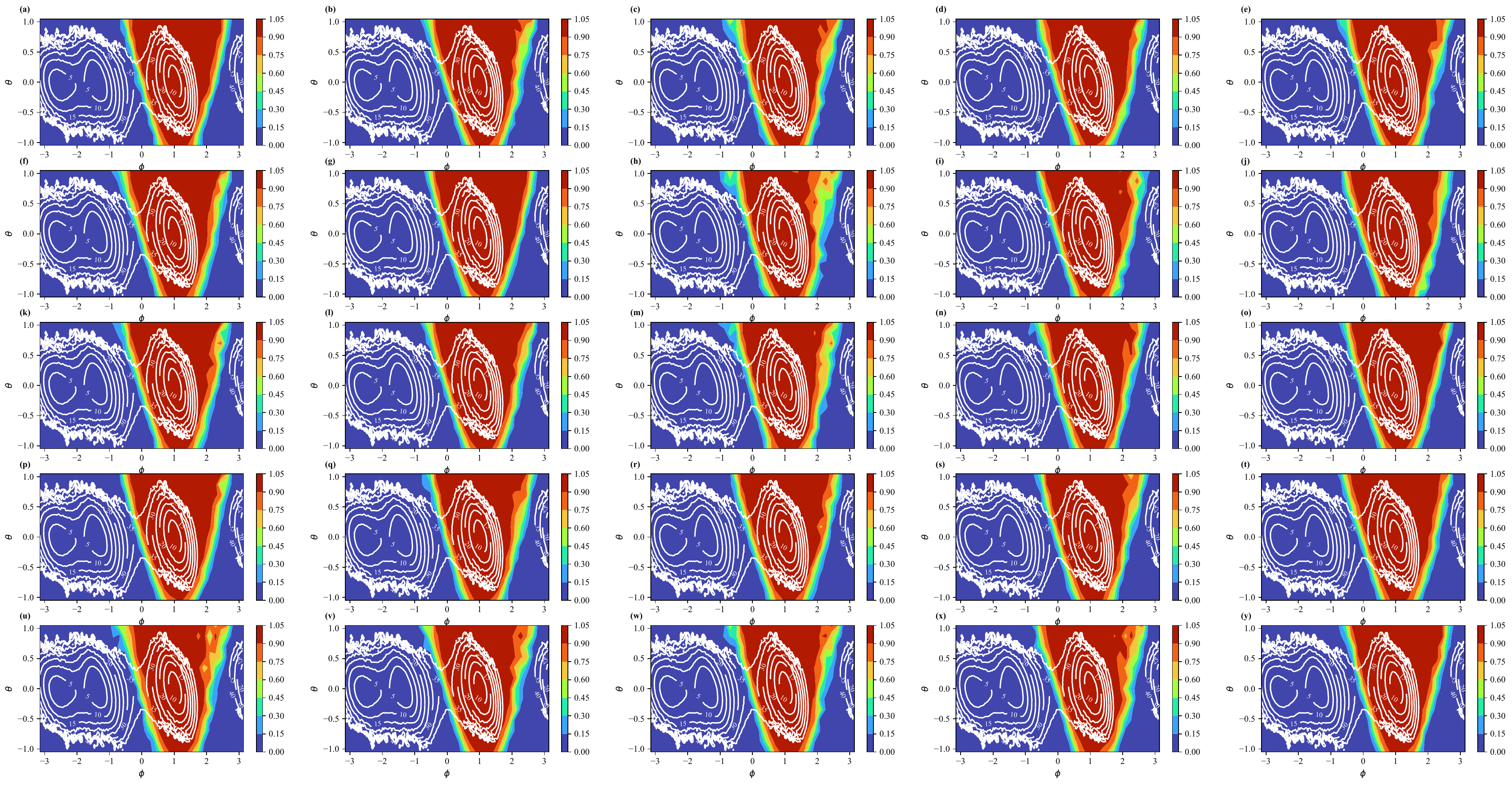}
    \caption{\label{ala2_gamma25_biased}Committor function learned using data from biased sampling with biased potential $V = -\lambda\log (\Vert q_0\Vert^2+\epsilon)$ at $\gamma = 25$. The contour plot shows the committor function projected onto the $\phi$-$\theta$ space. Each plane shows the committor function at a fixed velocity.}
\end{figure}

\begin{figure}
    \includegraphics[width=\linewidth]{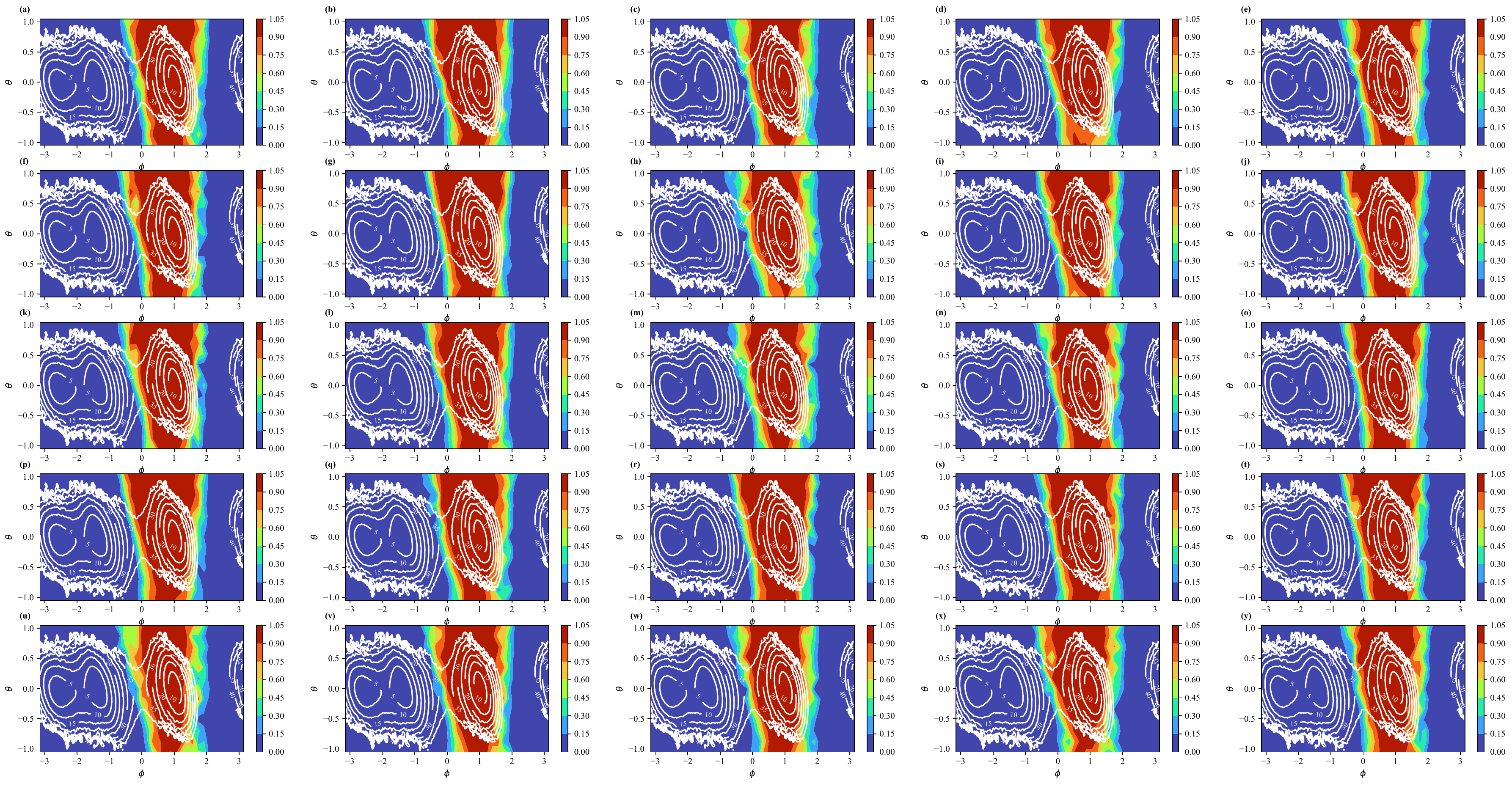}
    \caption{\label{ala2_gamma25_metad}Committor function learned using data from metadynamics at $\gamma = 25$. The contour plot shows the committor function projected onto the $\phi$-$\theta$ space. Each plane shows the committor function at a fixed velocity.}
\end{figure}

\bibliography{bibliography}
\bibliographystyle{model1-num-names}

\end{document}